\documentclass{llncs}
\usepackage[T1]{fontenc}
\usepackage{setspace}
\usepackage{amsmath,mathtools,extarrows}
\allowdisplaybreaks[4] 
\usepackage{amssymb}
\usepackage{amsfonts}
\usepackage{bm}
\usepackage{graphicx}
\usepackage{orcidlink}
\usepackage{url}
\hypersetup{colorlinks=true, urlcolor=blue, citecolor=blue, linkcolor=blue}
\usepackage{booktabs,multirow,makecell}
\usepackage{amsfonts}
\usepackage{xcolor}
\usepackage{tcolorbox}
\usepackage{tikz}
\usetikzlibrary{arrows.meta,decorations.pathreplacing}
\usepackage{longtable}
\usepackage{graphicx}
\usepackage{hyperref}
\usepackage{multirow}
\usepackage{makecell}
\usepackage{float}
\usepackage{pdflscape}
\usepackage{caption}
\usepackage{cleveref}
\usepackage{comment}
\usepackage[ruled,vlined]{algorithm2e}
\usepackage{placeins}
\SetKw{KwContinue}{continue}
\SetKw{KwYield}{yield}

\newcommand{\F}{\mathbb{F}}
\newcommand{\Fqm}{\F_{q^m}}
\newcommand{\E}{\mathbb{E}}
\newcommand{\GL}{\mathrm{GL}}
\newcommand{\Span}{\mathrm{Span}}
\newcommand{\rank}{{\mathrm{Rank}}}
\newcommand{\Mat}{\mathcal{M}}
\newcommand{\Gr}{\mathrm{Gr}}

\newcommand{\col}{\mathrm{col}}
\newcommand{\row}{\mathrm{row}}

\crefname{conjecture}{conjecture}{conjectures}
\Crefname{conjecture}{Conjecture}{Conjectures}

\begin{document}
\title{Breaking ACDGV MinRank Gabidulin \\ encryption~schemes over matrix~codes}
\titlerunning{Breaking ACDGV MinRank Gabidulin encryption schemes over matrix codes}
\author{
    Thai Hung Le 
        \inst{1,2}
        \orcidlink{0009-0006-4291-6297}
}

\authorrunning{Thai Hung Le}

\institute{  
        École normale supérieure, PSL University, CNRS, Inria, France\\
        \and 
        LTCI, Telecom Paris, Institut Polytechnique de Paris, France\\
        \email{hung.le@ens.fr} 
}

\maketitle       
\begin{abstract}

Enhanced Gabidulin Matrix Codes (EGMC), introduced by Aragon, Couvreur, Dyseryn, Gaborit, and Vin{\c{c}}otte at Asiacrypt 2024, were designed to hide the algebraic structure of Gabidulin matrix codes while enabling very compact McEliece- and Niederreiter-type encryption schemes, with ciphertexts as small as $65$ bytes at the claimed 128-bit security level.  Their security relies on the assumption that a masked EGMC code is hard to distinguish from a random matrix code.  We show that this enhanced construction leaves enough structure for an equivalent code of the secret key to be recovered.  Unlike previous cryptanalysis, our attack combines combinatorial and algebraic techniques to recover a Gabidulin-equivalent compressed code. This code can then be extended to a full-length equivalent secret key in polynomial time. As a result, the attack provides both a distinguisher and a key-recovery attack against the EGMC encryption schemes. The attack breaks all 16 proposed EGMC parameter sets by large margins. For example, for the claimed 128-bit parameter set $(2,17,37,4,0)$, it reduces the security level from 186 bits to 35 bits and in our implementation, the equivalent secret key of this parameter set is recovered in less than 10 minutes.

\keywords{Cryptanalysis \and Gr\"obner basis \and Gabidulin \and Matrix codes.}
\end{abstract}
\section{Introduction}
At Asiacrypt 2024, Aragon, Couvreur, Dyseryn, Gaborit, and Vin{\c{c}}otte introduced a new masking scheme for the Gabidulin matrix code enabling the application of the McEliece and Niederreiter framework to build encryption scheme \cite{aragon2024minrank}. The main idea is to start from an $\F_{q^m}$-linear Gabidulin code $\mathcal C_{\mathrm{vec}}$, expand it to a matrix code $\mathcal C_{\mathrm{mat}}=\Psi_\gamma(\mathcal C_{\mathrm{vec}})\subseteq \F_q^{m\times n}$ via an $\F_q$-basis $\gamma$ of $\F_{q^m}$, then hide the inherited $\F_{q^m}$-structure by appending random extra columns/rows to the basis and scrambling it with invertible matrices $P,Q$, resulting in an Enhanced Gabidulin Matrix Code (EGMC). An adaptation of the McEliece and Niederreiter encryption framework to general matrix codes is introduced by hiding a basis of a secret matrix code, for which there exists an efficient decoding algorithm. The authors finally applied these two frameworks to EGMC and introduced two encryption schemes: EGMC-McEliece and EGMC-Niederreiter. By analyzing possible attacks, the authors finally proposed two set of parameters for EGMC-Niederreiter encryption schemes that offer small ciphertext sizes. Later, at Eurocrypt 2026, Porwal, Wachter-Zeh, and Loidreau introduced an improved key-recovery attack on the schemes \cite{PorwalWL2026ACDGV}. Their attack is based on the kernel attack against the underlying MinRank problem and breaks 5 of the 16 proposed parameter sets. Although the attack remains exponential, it reduces the claimed security of one parameter set targeting 192-bit security to about 161 bits, while that of a 256-bit parameter set is reduced to about 223 bits.

In this work, we introduce a new distinguisher and its extension into a key-recovery attack against the EGMC-McEliece and EGMC-Niederreiter encryption schemes. Unlike the combinatorial distinguisher from \cite{aragon2024minrank}, which guesses both compression (or unscrambling) matrices $U$ and $V$ by brute force, our distinguisher guesses only one of the matrices and uses algebraic techniques to solve for the other one.  We consider two variants of the attack: guess-$V$-solve-$U$, which guesses a column compression $V$ and solves for the row compression $U$, and guess-$U$-solve-$V$, which guesses a row compression $U$ and solves for the full dual space of the hidden compressed Gabidulin code. 

Whenever the distinguisher succeeds, it reveals a punctured code of the secret $\F_{q^m}$-linear code. The second part of the attack extends the punctured code back to the original code in polynomial time by exploiting the structure of Gabidulin codes. Overall, the attack breaks all 16 parameter sets (see \Cref{tab:attack_cost_summary}). For example, an implementation of the attack recovers the equivalent secret keys of the parameter set (2, 17, 37, 4, 0) targeting 128-bit security level within less than $10$ minutes. Details of the implementation of the attack can be found at
\begin{center}
\url{https://gitlab.di.ens.fr/hle/attack_egmc}.
\end{center}

Our attack is different from known attacks as the known distinguishers and key-recovery attacks against EGMC parameters are exponential due to the cost of either brute force or MinRank solver. In such a situation, one may hope to restore security by increasing the parameters.  In contrast, our attack becomes polynomial-time whenever one of the two appending factors $\ell_1=0$ or $\ell_2=0$.  For these parameter families, increasing parameters while keeping either $\ell_1=0$ or $\ell_2=0$ without changing the masking construction does not restore the post-quantum security because the algebraic solving step still reveals the remaining hidden compression matrix in polynomial time. A repair would have to keep both $\ell_1$ and $\ell_2$ nonzero and choose them large enough so that the minimum of the two directions of our attack exceeds the target security level. However, since increasing $\ell_1$ and $\ell_2$ directly increases the dimensions of the public matrix code, and hence the key and ciphertext sizes, whether such a repair remains competitive is still needed to be investigated. Table \ref{tab:attack_cost_summary} summarizes the cost of our attack for both the reference parameters and the alternative parameters of the EGMC-Niederreiter encryption scheme and compares them with the best previous attack as in \cite{aragon2024minrank} and \cite{PorwalWL2026ACDGV}.  
\begin{table*}[htbp]
\centering
\caption{The cost of our attack versus the best previous attacks for the reference and alternate parameter sets. Complexities are given as $\log_2$ of the estimated gate count.  The columns $V\to U$ and $U\to V$ give respectively the guess-$V$-solve-$U$ and the guess-$U$-solve-$V$ costs, with the better of the two shown in bold. The complexities of the best previous attacks are taken from \cite{aragon2024minrank,PorwalWL2026ACDGV}.}
\label{tab:attack_cost_summary}
\scriptsize
\renewcommand{\arraystretch}{1.3}
\setlength{\tabcolsep}{5.7pt}

\begin{tabular}{|c|c|c|c|c|c|c|c|c|c|c|}
\hline
\textbf{Sec.} & \boldmath{$q$} & \boldmath{$k$} & \boldmath{$m$} &
\boldmath{$\ell_1$} & \boldmath{$\ell_2$} & \textbf{pk} & \textbf{ct} &
\boldmath{$V\to U$} &
\boldmath{$U\to V$} &
\makecell{\textbf{Previous best}\\\textbf{attack \cite{aragon2024minrank,PorwalWL2026ACDGV}}} \\
\hline\hline

\multirow{7}{*}{128}
  & 2  & 17 & 37 & 3 & 3 & 76\,kB  & 121\,B & \textbf{88}  & 164  & 165 \\ \cline{2-11}
  & 2  & 25 & 37 & 3 & 3 & 78\,kB  & 84\,B  & \textbf{114} & 164  & 150 \\ \cline{2-11}
  & 2  & 35 & 43 & 2 & 2 & 98\,kB  & 65\,B  & \textbf{108} & 135  & 145 \\ \cline{2-11}
  & 2  & 47 & 53 & 2 & 2 & 166\,kB & 66\,B  & \textbf{134} & 155  & 147 \\ \cline{2-11}
  & 2  & 17 & 37 & 4 & 0 & 70\,kB  & 111\,B & \textbf{35} & 186  & 112 \\ \cline{2-11}
  & 16 & 13 & 23 & 1 & 1 & 41\,kB  & 138\,B & \textbf{82}  & 130  & 148 \\ \cline{2-11}
  & 16 & 7  & 23 & 0 & 5 & 33\,kB  & 207\,B & 183 & \textbf{49}  & 160 \\
\hline \hline

\multirow{4}{*}{192}
  & 2 & 51 & 59 & 2 & 2 & 268\,kB & 89\,B  & \textbf{142} & 170 & 209 \\ \cline{2-11}
  & 2 & 23 & 43 & 5 & 0 & 133\,kB & 134\,B & \textbf{38} & 255 & 161 \\ \cline{2-11}
  & 2 & 33 & 47 & 5 & 0 & 173\,kB & 111\,B & \textbf{41} & 275 & 221 \\ \cline{2-11}
  & 2 & 41 & 53 & 4 & 0 & 230\,kB & 106\,B & \textbf{41} & 253 & 212 \\
\hline \hline

\multirow{5}{*}{256}
  & 2  & 23 & 47 & 3 & 3 & 191\,kB & 177\,B & \textbf{108} & 198 & 240 \\ \cline{2-11}
  & 2  & 37 & 53 & 3 & 2 & 274\,kB & 139\,B & \textbf{115} & 214 & 262 \\ \cline{2-11}
  & 2  & 71 & 79 & 2 & 2 & 667\,kB & 119\,B & \textbf{185} & 213 & 289 \\ \cline{2-11}
  & 16 & 9  & 29 & 2 & 1 & 87\,kB  & 334\,B & \textbf{69}  & 272 & 223 \\ \cline{2-11}
  & 16 & 17 & 29 & 2 & 1 & 107\,kB & 218\,B & \textbf{103} & 273 & 304 \\
\hline
\end{tabular}
\end{table*}

The rest of the paper is organized as follows.  \Cref{sec:preliminaries} recalls some preliminaries, the EGMC masking transformation, and the known attacks considered in the original article and in \cite{PorwalWL2026ACDGV}. \Cref{sec:hybrid_distinguisher} introduces the hybrid distinguisher in its two directions: guess-$V$-solve-$U$ and guess-$U$-solve-$V$.  \Cref{sec:code_extension} shows how to extend the recovered punctured code to an equivalent full secret Gabidulin code.  \Cref{sec:decoding_ciphertext} explains how this equivalent trapdoor decrypts both EGMC-McEliece and EGMC-Niederreiter ciphertexts.  \Cref{sec:prob_rank_def_U} shows the arguments for the rank-$1$ solving step, and \Cref{sec:complexity} computes the cost for the concrete parameter sets.

\section{Preliminaries}
\label{sec:preliminaries}
\subsection*{Notation}
\begin{itemize}
\item We use ordinary letters for scalars, vectors, matrices, and calligraphic letters for vector spaces.
\item For vectors $x,y$, $\langle x,y\rangle$ denotes their inner product $\sum_i x_i y_i$.
\item For a vector $v$, $v[i]$ denotes its $i$-th coordinate. For a matrix $M$, $M[i,j]$ denotes its entry at $i$-th row and $j$-th column.
\item Subscripts such as $v_i$ and $M_i$ denote a vector and a matrix in a families of vectors and matrices.
\item For matrices $A$ and $B$ with the same number of rows, $[A\mid B]$
denotes their horizontal concatenation. The same convention is used for
$[A_1\mid\cdots\mid A_s]$.  
\item $M^\top$ denotes the transpose of a matrix $M$. 
\item $\mathbf V(I)$ denotes the variety of the ideal $I$. 
\item For any field $\mathbb{K}$, positive integers $a,b$, and $M=(M[i,j])$ we define the vectorization map that sends a matrix to a vector as
\begin{align*}
\operatorname{vec}_{a,b} : \mathbb{K}^{a\times b} &\longrightarrow \mathbb{ K}^{a b}  \\
M &\longmapsto 
(M[1,1],\dots,M[1,b],M[2,1],\dots,M[a,b]).
\end{align*}
The reshaping map is defined as the inverse of the vectorization map \[\operatorname{mat}_{a,b} = \operatorname{vec}_{a,b} ^{-1}.\]
\item We define $\sigma:\F_{q^m} \to \F_{q^m}$ to be the Frobenius automorphism $\sigma(x)=x^q$.
\item For a vector $v$, $v^{[q^i]}$ denotes the vector obtained by applying $\sigma$ entrywise to $v$. For a matrix $V$, $V^{[q^i]}$ denotes the matrix obtained by applying $\sigma$ entrywise to $V$.
\item Two matrix codes
$\mathcal A,\mathcal B\subseteq \F_q^{a\times b}$ are called
\emph{equivalent} if there exist $L\in\GL_a(\F_q)$ and
$R\in\GL_b(\F_q)$ such that $\mathcal B=L\mathcal A R$.
\item For a positive integer $n$, we denote by
$
\mathrm{GL}_n(\mathbb{F}_q)
=
\{ A \in \mathbb{F}_q^{n \times n} : \det(A) \neq 0 \}
$
the general linear group of invertible $n \times n$ matrices over $\mathbb{F}_q$. Its size is
$
\left|\mathrm{GL}_n(\mathbb{F}_q)\right|
=
\prod_{i=0}^{n-1} (q^n-q^i).
$

\item For integers $0 \le r \le n$, we denote by
$
\mathrm{Gr}(r,n)
=
\{ \mathcal U \subset \mathbb{F}_q^n : \dim_{\mathbb{F}_q}(\mathcal U)=r \}
$
the Grassmannian of $r$-dimensional linear subspaces of $\mathbb{F}_q^n$. Its size is the Gaussian binomial coefficient
$
\left|\mathrm{Gr}(r,n)\right|
=
\binom{n}{r}_q
=
\prod_{i=0}^{r-1}
\frac{q^n-q^i}{q^r-q^i}.
$
\end{itemize}
\subsection{Gabidulin Codes}
Gabidulin codes \cite{delsarte1978bilinear,gabidulin1973} are a class of rank-metric codes that can be defined as an evaluation code on a class of $q$-polynomial of $q$-degree $r$ in $\F_{q^m}[x]$ of the form
\[
P(x)= a_0 x + a_1 x^q + a_2 x^{q^2} + \dots + a_{r} x^{q^{r}}.
\]
We denote by $\F_{q^m}[x]_{< k}$ the set of $q$-polynomials of $q$-degree strictly less than $k$. Let $k \le n \le m$ be integers, and let $ g = \{g_1, \dots, g_n\}$ be a set of $n$ linearly independent elements of $\F_{q^m}$ with respect to $\F_q$, then the Gabidulin code of length $n$, dimension $k$, evaluation vector $g$, denoted $\mathcal{G}_g(n,k,m)$ is defined by:
\[
\mathcal{G}_g(n,k,m) = \left\{  \left(P(g_1), ..., P(g_n)\right) \mid P \in \F_{q^m}[x]_{< k} \right\}.
\]
 
\subsection{Gabidulin Matrix Codes}
Let $\gamma = (\gamma_1, \dots, \gamma_m)$ be an $\mathbb{F}_q$-basis of $\mathbb{F}_{q^m}$. The $\gamma$-expansion of an element in $\mathbb{F}_{q^m}$ to a vector in $\mathbb{F}_q^m$ is defined as the map:
\begin{align*}
\Psi_{\gamma} :  \mathbb{F}_{q^m} &\longrightarrow  \mathbb{F}_q^m\\
x = \sum_{i=1}^m x_i \gamma_i  &\longmapsto (x_1, \dots, x_m)
\end{align*}
\noindent
We can extend $\Psi_{\gamma}$ naturally coordinate-wise to a word
$x=(x_1,\dots,x_n)\in \mathbb{F}_{q^m}^n$ by reading the expanded coordinates
as columns:
\[
\Psi_{\gamma}(x)
=
\bigl[\Psi_{\gamma}(x_1)\mid\cdots\mid\Psi_{\gamma}(x_n)\bigr]
\in \mathbb{F}_q^{m \times n}.
\]
Its inverse therefore identifies a matrix in $\mathbb{F}_q^{m\times n}$ with
the corresponding word in $\mathbb{F}_{q^m}^n$, column by column. When the
basis $\gamma$ is fixed, we simply write $\Psi$ for $\Psi_\gamma$. Then a
Gabidulin code can be turned into a Gabidulin matrix code of size $m \times n$
and dimension $mk$ by applying the $\gamma$-extension. However, different
$\gamma$-extensions of a Gabidulin code produce left-equivalent Gabidulin
matrix codes (see \cite{couvreur2020hardness}).

It is well-known that Gabidulin codes reach the Singleton bound, i.e. they have
minimum rank distance $n-k+1$; and have efficient decoding algorithm up to the unique decoding radius $\frac{n-k}{2}$ \cite{loidreau2005welch}. However, Gabidulin codes can easily be distinguished from random $\Fqm$-linear codes, and therefore cannot be directly used inside a McEliece-like cryptosystem \cite{couvreur2023extension,overbeck2008structural}.
\subsubsection{Overbeck distinguisher for Gabidulin matrix codes}
The Gabidulin code, as a $\F_{q^m}$-linear code, can be distinguished from a $\F_{q}$-linear random code in polynomial time by computing the left stabilizer algebra \cite{couvreur2020hardness}. In this section, we explain why a Gabidulin code can also be distinguished from a random $\F_{q^m}$-linear code in polynomial time using the Overbeck distinguisher \cite{overbeck2008structural}. 

Let $\mathcal C \subseteq \F_{q^m}^n$ be an $\F_{q^m}$-linear $[n,k]$ code, we define the
\emph{Frobenius sum}
\[
\Lambda_f(\mathcal C) \;=\; \mathcal C + \mathcal C^{[q^1]} + \cdots + \mathcal C^{[q^f]} .
\]
Let $G$ is a generator matrix of $\mathcal C$ then a
generator matrix of $\Lambda_f(\mathcal C)$ is obtained by stacking these
Frobenius powers by rows:
\[
G_{\Lambda,f}
=
\begin{bmatrix}
G\\
G^{[q^1]}\\
\vdots\\
G^{[q^f]}
\end{bmatrix}.
\]
For a Gabidulin code $\mathcal C_{\mathrm{Gab}}$, one has
$
\dim_{\F_{q^m}}\Lambda_f(\mathcal  C_{\mathrm{Gab}})=\min\{n,k+f\},
$
whereas for a random $[n,k]$ code one typically has
$
\dim_{\F_{q^m}}\Lambda_f(\mathcal  C)=\min\{n,k(f+1)\},
$
because the $f+1$ Frobenius shifts are generically $\F_{q^m}$-linearly independent.
This gap yields a polynomial-time distinguisher and forms the basis of Overbeck's
structural key-recovery: by choosing $f$ appropriately, one reconstructs a larger
Gabidulin code (with the same evaluation vector) from a suitable subcode and then
recovers the evaluation vector $g$, obtaining an equivalent secret key. Computing $\dim \Lambda_f(\mathcal C)$ amounts to Gaussian elimination on an
$((f+1)k)\times n$ matrix over $\F_{q^m}$, with cost
$
\tilde {\mathcal O}\!\big(((f+1)k)^2\,n\big)
$ operations in $\F_{q^m}$,
and in the common regime $(f+1)k=\Theta(n)$ this is $\tilde {\mathcal O}(n^3)$. The subsequent
recovery of the Gabidulin support/evaluation vector can be done in polynomial time,
often stated as
$
\mathcal O(n^3)
$ operations in $\F_{q^m}$.

\subsection{Enhanced Gabidulin Matrix Codes (EGMC)}
In \cite{aragon2024minrank}, a new transformation, named ``Random rows and columns matrix code transformation'', was introduced to mask a secret Gabidulin matrix code. Its main goal is to prevent the Overbeck attack mentioned previously.
\begin{definition}[Random Rows and Columns matrix code transformation \cite{aragon2024minrank}]
Let $m,n,K,\ell_1,\ell_2 \in \mathbb{N}$. Let $\mathcal{C}_{mat}$ be a matrix code of size $m \times n$ and dimension $K$ over $\F_q$. Let $\mathcal{B} = (\mathbf{A}_1, \dots, \mathbf{A}_K)$ be a basis of $\mathcal{C}_{mat}$. The Random Rows and Columns matrix code transformation consists in sampling uniformly at random the following matrices: $\mathbf{P} \xleftarrow{\$} \mathbf{GL}_{m+\ell_1}(\F_q)$, $\mathbf{Q} \xleftarrow{\$} \mathbf{GL}_{n+\ell_2}(\F_q)$ and $3K$ random matrices: $\mathbf{R}_i \xleftarrow{\$} \F_q^{m \times \ell_2}$, $\mathbf{R}'_i \xleftarrow{\$} \F_q^{\ell_1 \times n}$ and $\mathbf{R}''_i \xleftarrow{\$} \F_q^{\ell_1 \times \ell_2}$; and define the matrix code whose basis is:
\[
\mathcal{RB} = \left( \mathbf{P} \begin{bmatrix} \mathbf{A}_1 & \mathbf{R}_1 \\ \mathbf{R}'_1 & \mathbf{R}''_1 \end{bmatrix} \mathbf{Q}, \dots, \mathbf{P} \begin{bmatrix} \mathbf{A}_K & \mathbf{R}_K \\ \mathbf{R}'_K & \mathbf{R}''_K \end{bmatrix} \mathbf{Q} \right).
\]
\end{definition}

An Enhanced Gabidulin matrix code $\mathcal{EG}_{\boldsymbol{g}}(n,k,m,\ell_1,\ell_2)$ is the matrix code $\Psi_\gamma(\mathcal{G})$ on which we apply the Random Rows and Columns matrix code transformation above.

\begin{definition}[EGMC($k,m,n,\ell_1,\ell_2$) distribution \cite{aragon2024minrank}]
\label{def:egmc_distribution}
Let $k, m, n, \ell_1, \ell_2 \in \mathbb{N}$ be such that $k \le n \le m$. The Enhanced Gabidulin Matrix Code distribution EGMC($k,m,n,\ell_1,\ell_2$) sample $\mathbf g=(g_1,\dots,g_n)$ uniformly at random among all $n$-tuples of
$\F_{q^m}$ that are $\F_q$-linearly independent,
, a basis $\gamma \xleftarrow{\$} \mathcal{B}(\mathbb{F}_{q^m})$, $\mathbf{P} \xleftarrow{\$} \mathbf{GL}_{m+\ell_1}(\mathbb{F}_q)$, $\mathbf{Q} \xleftarrow{\$} \mathbf{GL}_{n+\ell_2}(\mathbb{F}_q)$ and $3km$ random matrices: $\mathbf{R}_i \xleftarrow{\$} \mathbb{F}_q^{m \times \ell_2}$, $\mathbf{R}'_i \xleftarrow{\$} \mathbb{F}_q^{\ell_1 \times n}$ and $\mathbf{R}''_i \xleftarrow{\$} \mathbb{F}_q^{\ell_1 \times \ell_2}$; computes $\mathcal{B} = (\mathbf{A}_1, \dots, \mathbf{A}_{km})$ a basis of the matrix code $\Psi_\gamma(\mathcal{G})$, and outputs the matrix code with basis:
\[
\mathcal{RB} = \left( \mathbf{P} \begin{bmatrix} \mathbf{A}_1 & \mathbf{R}_1 \\ \mathbf{R}'_1 & \mathbf{R}''_1 \end{bmatrix} \mathbf{Q}, \dots, \mathbf{P} \begin{bmatrix} \mathbf{A}_{km} & \mathbf{R}_{km} \\ \mathbf{R}'_{km} & \mathbf{R}''_{km} \end{bmatrix} \mathbf{Q} \right).
\]
\end{definition}  

\begin{definition}[EGMC-Indistinguishability problem \cite{aragon2024minrank}]
Given a matrix code $\mathcal{C}_{mat}$ of size $(m+\ell_1) \times (n+\ell_2)$ and dimension $mk$, the Decisional EGMC-Indistinguishability $(k, m, n, \ell_1, \ell_2)$ problem asks to decide with non-negligible advantage whether $\mathcal{C}_{mat}$ is sampled from the EGMC$(k, m, n, \ell_1, \ell_2)$ distribution or the uniform distribution over the set of $\mathbb{F}_q$-subspaces of $\mathbb{F}_q^{(m+\ell_1) \times (n+\ell_2)}$ of dimension $mk$.
\end{definition}

\begin{definition}[EGMC-Search problem \cite{aragon2024minrank}] Let $k, m, n, \ell_1, \ell_2 \in \mathbb{N}$ be such that $k \le n \le m$, and $\mathcal{C}_{mat}$ sampled from the EGMC($k,m,n,\ell_1,\ell_2$) distribution. The EGMC-Search problem asks to retrieve the basis $\gamma \in \mathcal{B}(\mathbb{F}_{q^m})$ and the evaluation vector $\mathbf{g} \in \mathbb{F}_{q^m}^n$ used to construct $\mathcal{C}_{mat}$.
\end{definition}

It should be noted that the outputs of this search problem are not unique as changing $\gamma$ corresponds to left multiplication by an invertible matrix over $\mathbb{F}_q$; changing the evaluation vector by an $\mathbb{F}_q$-linear transformation can produce equivalent Gabidulin descriptions; and, the masking by $(P,Q)$ further destroys uniqueness. We can say that what is normally and necessarily recoverable is an \emph{equivalent trapdoor}, not the original $(\gamma,\mathbf g)$.

\subsection{EGMC-McEliece and EGMC-Niederreiter encryption schemes}
In \cite{aragon2024minrank}, the authors proposed two encryption schemes whose the public key is an Enhanced Gabidulin Matrix Code, and the secret key is the original Gabidulin matrix code and two scambling matrices (see Appendix \ref{appendix:McEliece_Niederreiter_encryption_framework}). The security of these encryption schemes is based on the conjectured difficulty of the EGMC-Indistinguishability problem. 

\subsection{Previous attacks on EGMC}
This subsection recalls the attacks considered in the security analysis of EGMC \cite{aragon2024minrank} and in the cryptanalysis by Porwal, Wachter-Zeh, and Loidreau \cite{PorwalWL2026ACDGV}. All of these attacks have exponential complexity.

\subsubsection{The combinatorial distinguisher of EGMC}
In \cite{aragon2024minrank}, the authors introduced a combinatorial distinguisher that solves the EGMC-Indistinguishability problem with the complexity of 
\begin{equation*}
\widetilde{\mathcal{O}}\left(q^{m \ell_1 + (k+1) \ell_2}\right).
\end{equation*}
The idea is to apply a projection map on both the row and column spaces of the public code $\mathcal{C}$ to get rid of the contributions of the matrices $R_i$, $R'_i$ and $R''_i$ while not destroying the underlying
$\F_{q^m}$-linear structure. Likewise, in the terminology used throughout the rest of the paper, these projection matrices are the row and column compression matrices: \emph{the row compression is the matrix multiplying public codewords on the left, and the column compression is the matrix multiplying them on the right}. We guess two such compression matrices: a full-rank row compression $U = [\,U_0 \mid 0\,] \in \F_q^{m\times (m+\ell_1)}$, where $U_0 \in \GL_m(\F_q)$; and for some integer $n'$ such that $k < n' \le n$, a full-rank column compression
$V =
\begin{bmatrix}
V_0\\
0
\end{bmatrix} \in \F_q^{(n+\ell_2)\times n'}$ where $V_0 \in \F_q^{n\times n'}$. 

The resulting compressed code is $U\mathcal C V=\{UMV:M\in\mathcal C\}$. Then we can distinguish any $\F_{q^m}$-linear code, and hence $\mathcal{C}$, from a random code by computing the dimension of the left stabilizer of $U\mathcal C V$. We need $n'>k$ because if $n'\le k$, then any $k$-dimensional
$\F_{q^m}$-linear code of length $n'$ coincides with the whole ambient
space $\F_{q^m}^{n'}$. Under matrix representation this is
$\F_q^{m\times n'}$, whose left stabilizer algebra is the full matrix
algebra $\mathcal{M}_m(\F_q)$. In that regime the method can not distinguish. Therefore the smallest informative choice is $n'=k+1$.
\subsubsection{Overbeck-like distinguishers of EGMC}
In EGMC-type constructions, the Frobenius map in the original Overbeck attack \cite{overbeck2008structural} is
\emph{not directly available} on the public matrix code, so an Overbeck attack cannot
be applied directly. However, an Overbeck-like distinguisher has been proposed in the security analysis of EGMC: One can use the fact that for a hidden Gabidulin matrix code there exist nontrivial
      multipliers $D$ such that the span of $C + DC$ fails to fill the ambient space,
      while for a random code it typically does. This leads to a bilinear system in
      unknowns $(D,M)$ with constraints of the form $\mathrm{Tr}(DCM^\top)=0$ for all
      public basis matrices $C$; naive linearization yields too many variables
      (on the order of $m^4$) for only $\Theta(m^2)$ equations, suggesting infeasibility
      for the targeted parameters.

\subsubsection{Key Recovery Attack based on the kernel attack of the Minrank problem \cite{PorwalWL2026ACDGV}}
At Eurocrypt 2026, Porwal, Wachter-Zeh and Loidreau introduced a Minrank-based key-recovery attack against EGMC. The cost of the attack is still exponential but it breaks 5 over 16 parameter sets of EGMC. The attack exploits the fact that the hidden Gabidulin matrix code has a large left-stabilizer algebra, whereas a random public-looking matrix code has stabilizer dimension only
$1$.  For public generators
\[
B_i=P\begin{bmatrix}A_i&*\\ *&*\end{bmatrix}Q,
\]
the attacker searches for a matrix $F$ and public codewords
$C_i\in \mathcal C_{\mathrm{pub}}$ such that
\[
F B_i-C_i
=
P\begin{bmatrix}0&*\\ *&*\end{bmatrix}Q .
\]
This is done by guessing left and right kernel vectors $v,w$ and using
the linear equations
\[
v^\top\left(FB_i-\sum_j \mu_{i,j}B_j\right)w=0
\]
to solve for $F$ and the coefficients $\mu_{i,j}$.  Once such
relations are found, they reveal enough information about the hidden row
and column transformations $P,Q$ to remove the appended random rows and
columns, thereby recovering a code equivalent to
$\mathcal C_{\mathrm{mat}}$. 
\subsubsection{Message attacks}
Two classes of attacks on the message are analyzed in the security analysis of EGMC in \cite{aragon2024minrank}: generic rank decoding attacks and MinRank attacks. These attacks do not exploit the hidden Gabidulin structure in the same way as the structural distinguishers above; they are recalled here to explain the baseline security model used for the proposed parameters.
\paragraph{Generic rank decoding.}
Given a ciphertext $y = xG_{\mathrm{pub}} + z$ with $\rank(z)=r$, an attacker
without the key faces the rank-metric decoding problem for a length-$n$ code over
$\mathbb{F}_{q^m}$. Generic rank decoding algorithms have exponential cost in the
security parameters; historically, this was one motivation for rank-metric McEliece
variants (before structural breaks).
\paragraph{Generic MinRank.}
For matrix-code McEliece/Niederreiter frames, decrypting without the secret key
reduces to solving the following MinRank instance from public matrices $M_i$ spanning the public code :
$
\rank (Y - \sum_{i=1}^{K} x_i M_i) \le r.
$
These attacks have exponential costs for proposed parameters of EGMC.

\section{Hybrid Distinguisher for EGMC}
\label{sec:hybrid_distinguisher}
This section introduces a new ``hybrid'' distinguisher for the EGMC Indistinguishability Problem. Let $k\le n\le m$ and
$\mathcal C\subset \F_q^{(m+\ell_1)\times(n+\ell_2)}$ be the public $\F_q$-subspace of dimension $mk$, which is EGMC in our setting. Let $\mathcal U$ be the set of full-rank matrices
$U\in\F_q^{m\times(m+\ell_1)}$, and let $\mathcal V$ be the set of
full-rank matrices $V\in\F_q^{(n+\ell_2)\times(k+1)}$. For a chosen pair of matrices $(U,V)\in\mathcal U\times\mathcal V$, define
\begin{align*}
\delta_{U,V}:\F_q^{(m+\ell_1)\times(n+\ell_2)}&\longrightarrow \F_q^{m\times(k+1)}
\\
X&\longmapsto UXV,
\end{align*}
and we set
$
\mathcal S_{U,V} = \delta_{U,V}(\mathcal C)\subseteq \F_q^{m\times(k+1)}
$ the compressed code of $\mathcal C$ by $U$ and $V$.

The hybrid distinguisher is based on the following main idea: \emph{instead of guessing both a row compression and a column compression, one guesses one compression and solves algebraically for the other one}.  

We present two directions.  The first is the guess-$V$-solve-$U$ direction, where one guesses a column compression $V\in \F_q^{(n+\ell_2)\times(k+1)}$ and solves for a row compression $U\in \F_q^{m\times(m+\ell_1)}$ such that the compressed code $\mathcal S_{U,V}$ is \emph{equivalent} to a matrix representation of a $[k+1,k]$ Gabidulin code over $\F_{q^m}$.  The second is the guess-$U$-solve-$V$ direction, where one guesses $U$ and solves for a full column compression $V$ with a similar goal in mind.  

\begin{remark}
  To be more precise about the code equivalence property, from now on we work with an equivalence class of $(U,V)$ where $(U',V') \sim (U,V)$ when there exist $ G\in\GL_m(\F_q),\; H\in\GL_{k+1}(\F_q)$ such that $(U',V') = (GU,VH)$. $\mathcal S_{U,V}$ is
equivalent to a matrix representation of a $[k+1,k]$ Gabidulin code means that it is a $[k+1,k]$ Gabidulin code up to left and right multiplication by invertible matrices over $\F_q$. The specific subspace $\mathcal S_{U,V}$ depends on the
choice of representative $U,V$, but this equivalence property depends only on the equivalence class of the pair.
\end{remark}
\subsection{Guess-$V$-solve-$U$ direction}
\label{subsec:guessVsolveU}
\subsubsection{Phase 1: Guessing $V$.}
We treat $V$ as the ``guess'' variable. We call $V$ \emph{valid} if there exists a matrix $U$ such that the pair $(U,V)$ compresses $\mathcal{C}$ into a subspace $\mathcal{S}_{U,V}$ such that $\mathcal{S}_{U,V}$ is equivalent to a $[k+1,k]$ Gabidulin code. 

Due to the construction of the EGMC, we know that if
$ V = Q^{-1} V'$ and $V'=[(V'')^\top\mid 0]^\top $ where $V'' \in \F_q^{n\times (k+1)}$, $Q\in\GL_{n+\ell_2}(\F_q)$ (as in Definition~\ref{def:egmc_distribution}), then there exists a pair $(U,V)$ compressing $\mathcal{C}$ into the desired Gabidulin code. Therefore, the probability of guessing a valid $V$ among random full-rank matrices of size $(n+\ell_2)\times(k+1)$ is
\[
\frac{|\Gr(k+1,n)|}{|\Gr(k+1,n+\ell_2)|}
=\frac{\binom{n}{k+1}_q}{\binom{n+\ell_2}{k+1}_q}
\approx q^{-(k+1)\ell_2}.
\]

\subsubsection{Phase 2: Algebraic Modeling.}
We first describe the overall idea before diving into details.  For a valid guess
$V$, the compressed code $\mathcal S_{U,V}$ should become, after the
column-wise identification with $\F_{q^m}^{k+1}$, an
$\F_{q^m}$-hyperplane.  Thus it is annihilated by some nonzero vector
$h\in\F_{q^m}^{k+1}$. We model this condition with the unknown row
compression $U$ and the unknown hyperplane vector $h$, obtaining a
bilinear system in the entries of $h$ and in the columns of $U$.  We then
linearize the system and compute its kernel.  Since this linear system is
underdetermined, its kernel contains many linearized solutions. Yet, the solutions
coming from an actual pair $(h,U)$ are precisely the kernel elements whose
product matrix $W=(w_{j,s})$ has rank $1$. We now explain the algebraic model in more details.

Using the coordinate-wise extension of $\Psi=\Psi_\gamma$, let
$\col_{\F_{q^m}}$ denote the inverse word expansion on $k+1$ columns:
\begin{align*}
\col_{\F_{q^m}} = \left(\Psi^{-1}\right)^{\oplus k+1}:\F_q^{m\times(k+1)} &\longrightarrow \F_{q^m}^{k+1} \\
\big([x_1|\cdots|x_{k+1}]\big)
&\longmapsto
\big(\Psi^{-1}(x_1),\dots,\Psi^{-1}(x_{k+1})\big).
\end{align*} For a valid $V$, there exists $U$ such that $\mathcal{S}_{U,V}$ has dimension $mk$ since it is equivalent to a $[k+1,k]$ Gabidulin matrix code. Here, equivalence means there exists another basis $\gamma'$ of $\F_{q^m}$ over $\F_q$ such that $\mathcal{S}_{U,V}=\Psi_{\gamma'}(\mathcal{D})$ for an $\F_{q^m}$-linear code $\mathcal{D}\subseteq \F_{q^m}^{k+1}$ of dimension $k$. However, as the code is equivalence up to the left-multiplication by a matrix in $\GL_m(\F_q)$, we may assume $\gamma'=\gamma$. Therefore, we have the following proposition: 
\begin{proposition}
\label{prop:main_prop}
If $\mathcal S_{U,V}$ is equivalent to the matrix representation of a
$[k+1,k]$ Gabidulin code over $\F_{q^m}$ then there exists a nonzero vector
    $h\in\F_{q^m}^{k+1}$, unique up to $\F_{q^m}^\times$-scaling, such
    that
    \[
    \col_{\F_{q^m}}(\mathcal{S}_{U,V})
    =
    \{x\in\F_{q^m}^{k+1}:\langle h,x\rangle=0\}.
    \]
\end{proposition}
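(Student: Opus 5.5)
The argument is essentially linear algebra over $\F_{q^m}$. The one point needing care is the reduction from ``equivalent to'' (left and right multiplication by $\F_q$-invertible matrices) to the statement that $\col_{\F_{q^m}}(\mathcal S_{U,V})$ is itself an $\F_{q^m}$-linear code of dimension $k$. Once that is in place, the existence and uniqueness of $h$ is the standard duality between hyperplanes of $\F_{q^m}^{k+1}$ and lines of the dual space.

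\textbf{Step 1 (reduction to an $\F_{q^m}$-linear code).} By hypothesis there exist $G\in\GL_m(\F_q)$ and $H\in\GL_{k+1}(\F_q)$ with $\mathcal S_{U,V}=G\,\Psi_\gamma(\mathcal D_0)\,H$, where $\mathcal D_0$ is a $[k+1,k]$ Gabidulin code. I will absorb both factors.

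For the right factor: for $x\in\F_{q^m}^{k+1}$, right multiplication by $H\in\GL_{k+1}(\F_q)$ commutes with $\Psi_\gamma$, that is, $\Psi_\gamma(x)H=\Psi_\gamma(xH)$. Indeed, $\Psi_\gamma$ is $\F_q$-linear column by column and $H$ only forms $\F_q$-combinations of columns. Hence $\Psi_\gamma(\mathcal D_0)H=\Psi_\gamma(\mathcal D_0H)$, and $\mathcal D_0H$ is still $\F_{q^m}$-linear of dimension $k$, since $H$ is invertible.

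For the left factor: $G\Psi_\gamma(y)=\Psi_{\gamma'}(y)$, where $\gamma'$ is the basis obtained from $\gamma$ via $G^{-1}$. The paper's preceding discussion already allows this change of basis, so we may assume $\gamma'=\gamma$ by replacing $U$ with $G^{-1}U$; this is harmless because the equivalence class of $(U,V)$ is unchanged. We obtain $\col_{\F_{q^m}}(\mathcal S_{U,V})=\mathcal D$ with $\mathcal D$ an $\F_{q^m}$-subspace of $\F_{q^m}^{k+1}$ of dimension exactly $k$. Here $\dim_{\F_q}\mathcal S_{U,V}=mk$, and this is consistent because $\Psi$ is an $\F_q$-isomorphism.

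\textbf{Step 2 (existence of $h$).} Consider the nondegenerate symmetric bilinear form $\langle x,y\rangle=\sum_i x_iy_i$ on $\F_{q^m}^{k+1}$. The orthogonal $\mathcal D^\perp=\{h:\langle h,x\rangle=0\ \forall x\in\mathcal D\}$ has dimension $(k+1)-k=1$. Pick any nonzero $h\in\mathcal D^\perp$. Then $\mathcal D\subseteq h^\perp:=\{x:\langle h,x\rangle=0\}$. Since $h\neq0$, the map $x\mapsto\langle h,x\rangle$ is a nonzero linear form, so $\dim h^\perp=k$. Equal dimensions give $\mathcal D=h^\perp$.

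\textbf{Step 3 (uniqueness).} Suppose $h'\neq0$ also satisfies $h'^\perp=\mathcal D$. Then $h'\in\mathcal D^\perp$, which is one-dimensional, so $h'=\lambda h$ with $\lambda\in\F_{q^m}^\times$. Equivalently, two nonzero linear forms with the same kernel are proportional.

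\textbf{Main obstacle.} Step 1 is where the work lies. One must make sure that the right factor $H$ and the left factor $G$ genuinely preserve $\F_{q^m}$-linearity under the column identification, and that the basis used in $\col_{\F_{q^m}}$ matches the one for which the compressed code is $\F_{q^m}$-linear. If one did not absorb $G$ into $U$, the image under $\Psi_\gamma^{-1}$ would only be linear over the twisted field structure, and the proposition would fail for that fixed representative. So the statement must be read for the representative of the class of $(U,V)$ fixed in the remark.
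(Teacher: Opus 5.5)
Your proof is correct and follows the same route as the paper. Both use the left $\GL_m(\F_q)$ freedom (the paper's ``we may assume $\gamma'=\gamma$'') to normalize to $\mathcal S_{U,V}=\Psi_\gamma(\mathcal D)$ with $\mathcal D$ an $\F_{q^m}$-linear $[k+1,k]$ code, and then conclude because a hyperplane of $\F_{q^m}^{k+1}$ is the kernel of a linear form that is unique up to scaling. You also make explicit two points the paper leaves implicit: that the right factor $H\in\GL_{k+1}(\F_q)$ commutes with $\Psi_\gamma$, and that the statement holds only for a suitably normalized representative of the class of $(U,V)$.
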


\begin{proof}
We have $\mathcal{S}_{U,V}=\Psi_\gamma(\mathcal{D})$ for an $\F_{q^m}$-linear code $\mathcal{D}\subseteq \F_{q^m}^{k+1}$ of dimension~$k$. That implies $\col_{\F_{q^m}}(\mathcal{S}_{U,V})=\mathcal{D}$.
  In other words, $\col_{\F_{q^m}}(\mathcal{S}_{U,V})$ is an $\F_{q^m}$-hyperplane in $\F_{q^m}^{k+1}$, so there exists a nonzero
  $h\in \F_{q^m}^{k+1}$, unique up to
  $\F_{q^m}^\times$-scaling, such that
  $
  \col_{\F_{q^m}}(\mathcal{S}_{U,V})
  =
  \{x\in \F_{q^m}^{k+1}:\langle h,x\rangle=0\}.
  $ 
\end{proof}
Using \Cref{prop:main_prop}, the problem of finding $U$ is equal to finding $h \in \F_{q^m}^{k+1} \setminus \{0\}$ such that 
\begin{align}
  \label{eq:main}
  \left\langle h,\ \col_{\F_{q^m}}(\mathcal{S}_{U,V})\right\rangle=0.
\end{align} 

For a fixed basis $\{c_1,\dots,c_{mk}\}$ of $\mathcal{C}$, \Cref{eq:main} has $(k+1) \times (m + \ell_1)$ bilinear variables over $\F_{q^m}$ and $mk$ equations whose coefficients are over $\F_q$. Indeed, we set 
\begin{align*}
s_i &= U c_i V \in \mathcal{S}_{U,V} \quad \forall \; i=1,\dots,mk\\
D_i &=c_iV\in \F_q^{(m+\ell_1)\times(k+1)} \quad \forall \; i=1,\dots,mk.
\end{align*}
\noindent
We write $D_i$ as $[d_{i,1}|\cdots|d_{i,k+1}]$ with $d_{i,j}\in \F_q^{m+\ell_1}$ and $U\in \F_q^{m\times(m+\ell_1)}$ by columns $U=[u^{(\mathrm{vec})}_1|\cdots|u^{(\mathrm{vec})}_{m+\ell_1}]$ with $u^{(\mathrm{vec})}_s\in \F_q^m$. Then, we convert these columns $u^{(\mathrm{vec})}$ to a vector $u\in\F_{q^m}^{m+\ell_1}$ with the coordinates
\[
u[s]=\Psi^{-1}\big(u^{(\mathrm{vec})}_s\big)\in \F_{q^m} \quad \forall \; s \in 1,\dots,m+\ell_1.
\]
Then for each $(i,j)$ :
\begin{align*}
\Psi^{-1}(U d_{i,j})
&=\Psi^{-1}\left(\sum_{s=1}^{m+\ell_1} d_{i,j}[s] u^{(\mathrm{vec})}_s\right) \\
&=\sum_{s=1}^{m+\ell_1} d_{i,j}[s]  \Psi^{-1}\left(u^{(\mathrm{vec})}_s\right) \text{ because $d_{i,j}[s]\in \F_q$ and $\Psi$ is $\F_q$-linear }\\
&= \sum_{s=1}^{m+\ell_1} d_{i,j}[s]  u[s]
\; \in \; \F_{q^m}.
\end{align*}
Then we have
\begin{align}
    \label{eq:hyperplaneconstraint}
    \Cref{eq:main} &\iff 
\sum_{j=1}^{k+1} h[j]\cdot \Psi^{-1}(U d_{i,j}) = 0\notag\\
&\iff
\sum_{j=1}^{k+1}\sum_{s=1}^{m+\ell_1} d_{i,j}[s] (h[j] u[s])=0.
\end{align}
Now, we introduce the bilinear products
\[
w[j,s]=h[j]u[s]\in \F_{q^m} \; \text{where }
\; 1\le j \le k+1,\ 1\le s\le m+\ell_1.
\]
Then the constraints (\ref{eq:hyperplaneconstraint}) become the linear system over $\F_{q^m}$:
\begin{equation}
\sum_{j=1}^{k+1}\sum_{s=1}^{m+\ell_1} d_{i,j}[s] w[j,s]=0
\;\; \text{for } i=1,\dots,mk,
\label{eq:linearized-Fqm}
\end{equation}
where all coefficients $d_{i,j}[s]$ lie in the subfield $\F_q$. Let
$
N_w=(k+1)(m+\ell_1)
$
and set
\[
w= \bigl(w[j,s]\bigr)_{\substack{1\le j\le k+1 \\ 1\le s\le m+\ell_1}}
\in \F_{q^m}^{N_w}.
\]
Then \eqref{eq:linearized-Fqm} is a homogeneous linear system of $mk$ equations, with all coefficients in $\F_q$, while the $N_w$ unknowns are in $\F_{q^m}$.

Let $A(V)\in \F_q^{mk\times N_w}$ be the coefficient matrix of \eqref{eq:linearized-Fqm} as these coefficents depend on the guess of $V$. In more details, its entries are the scalars $d_{i,j}[s]\in \F_q$ placed in the obvious way.
Since all entries of $A(V)$ lie in $\F_q$, its right kernel over $\F_{q^m}$ is obtained from its right kernel over $\F_q$ by scalar extension
$
\ker_{\F_{q^m}} A(V)=\F_{q^m}\otimes_{\F_q}\ker_{\F_q}A(V)
$,
identified with the $\F_{q^m}$-solution space by
$\alpha\otimes x\mapsto \alpha x$,
so the same $\F_q$-basis is also a basis of $\ker_{\F_{q^m}}A(V)$ after scalar
extension. However, this does not mean that the kernel elements we seek have entries
in $\F_q$ since the unknowns $w[j,s]$ are still
$\F_{q^m}$-valued. It only means that, because the linear equations are
defined over the subfield $\F_q$, the $\F_{q^m}$-solution space admits a
basis with entries in $\F_q$.  

In more details, if $A(V)$ has full row rank $mk$ over $\F_q$ (equivalently
over $\F_{q^m}$), then
\[
\rho =\dim_{\F_{q^m}}\ker A(V) = N_w - mk = m+(k+1)\ell_1 \;.
\]
For an $\F_q$-basis $b_1,\dots,b_\rho\in\F_q^{N_w}$ of
$\ker_{\F_q}A(V)$, we set
\[
K_t=\operatorname{mat}_{k+1,m+\ell_1}(b_t)
\in\F_q^{(k+1)\times(m+\ell_1)}
\quad\text{for }t=1,\dots,\rho.
\]

\noindent
Because the variables originally come from the bilinear products $w[j,s]=h[j]u[s]$, any element of $\ker_{\F_{q^m}}A(V)$ can then be
written uniquely with extension-field coefficients as
\[
W(\alpha)=\sum_{t=1}^{\rho}\alpha[t] K_t
\;\; \text{where } \alpha=(\alpha[1],\dots,\alpha[\rho])\in \F_{q^m}^{\rho}.
\]
Recovering a valid pair $(h,u)$ therefore reduces to finding a
nonzero kernel vector of $A(V)$ whose reshaped matrix (via the reshaping map) has rank $1$. Thus, we seek a nonzero $W(\alpha)$ such that
\begin{equation}
    \label{minrankinstance}
\rank_{\F_{q^m}}(W(\alpha))=1.
\end{equation}
\noindent
The rank-$1$ condition~\eqref{minrankinstance} is equivalent to the vanishing of all $2\times 2$ minors of $W(\alpha)$. The quadratic system generated from these $2\times 2$ minors has
\[
  N_{\mathrm{eq}}
    = \tbinom{k+1}{2}\,\tbinom{m+\ell_1}{2}
\]
\emph{homogeneous quadratic} equations in the $\rho$ coordinates
$\alpha[1],\dots,\alpha[\rho]$ over $\F_{q^m}$. 
\begin{remark} Any rank-$1$ kernel vector $W=h u^\top$ produces a candidate pair $(h,u)$, but not every such candidate corresponds to a valid $U$ because the recovered $U$ may be rank-deficient. The rank-$1$ extraction is just a necessary condition that we follow with the full-rank check in Phase~4 to filter out spurious candidates.
\end{remark}

\begin{algorithm}[htbp]
\fontsize{8}{9}\selectfont
\LinesNumbered
\DontPrintSemicolon
\caption{Phase 2: construct the linearized kernel}
\label{alg:guessV_phase2}
\KwIn{$M=m+\ell_1$, $N=n+\ell_2$, $(C_i)_{i=1}^{mk}\subset \F_q^{M\times N}$ and
$V\in \F_q^{N\times (k+1)}$.}
\KwOut{$A(V)$ and matrices $K_1,\ldots,K_\rho\in F^{(k+1)\times M}$
whose vectorizations form a basis of $\ker_{\F_q} A(V)$.}
$\mathsf{Mat}\gets0_{mk\times (k+1)M}$\;
\For{$i=1,\ldots,mk$}{
  $D_i\gets C_iV\in \F_q^{M\times (k+1)}$\;
  Set $\mathsf{Mat}[i,(j-1)M+s]\gets D_i[s,j]$ for
  $1\le j\le k+1$, $1\le s\le M$\;
}
$(b_1,\ldots,b_\rho)\gets\text{a basis of }\ker_{\F_q} \mathsf{Mat}$\;
$K_t\gets\operatorname{mat}_{k+1,M}(b_t)$ for $t=1,\ldots,\rho$\;
\Return $(\mathsf{Mat},(K_t)_{t=1}^{\rho})$\;
\end{algorithm}
\FloatBarrier

\subsubsection{Phase 3: Rank-1 solution extraction.}
A naive approach is to compute the Gr\"obner basis of the homogeneous quadratic system. However, the computation is often expensive because it tries to produce a complete symbolic elimination description of the ideal, and for quadratic determinantal systems the intermediate degree and number of S-polynomial reductions can grow quickly. As the system at the end of Phase 2 has symmetries, which will be explained shortly, we use the eigenvalue method for solving zero-dimensional polynomial systems (see \cite{cox2005using}, Chapter~2, Section~4) which significantly accelerates the solving process\footnote{This method is sometimes called Stickelberger-Eigenvalue method. An interesting history of the method can be found in the note of David A. Cox in \cite{cox2021stickelberger}.}.

The first observation is that the quadratic system generated from $2 \times 2$ minors is projective because if $\alpha$ is a solution,
so is $\lambda\alpha$ for every $\lambda\in \F_{q^m}^{\times}$.  For an
index $t_0$ such that $\alpha[t_0]\ne0$, we set $\alpha[t_0]=1$.
After this, we get an affine system, with $\rho-1$ unknowns.  

Let $\mathcal M$ be the degree-$2$ Macaulay matrix obtained from linearization of the quadratic system. Experimentally, we observed that \textit{at degree $2$ the $\mathcal M$ nullity dimension is already $m$ and remains $m$ at
higher degrees} (see Appendix \ref{appendix:macaulay}). Thus, we propose the following conjecture whose supporting arguments will be provided in \Cref{sec:prob_rank_def_U}.
\begin{conjecture} 
\label{conj:dim_ker_m}  
Let
$
I\subseteq \F_q[\alpha[1],\dots,\alpha[\rho-1]]
$
be the ideal generated by the $2\times2$ minors, and let $A=\F_q[\alpha[1],\dots,\alpha[\rho-1]]/I$ then
  \[\dim_{\F_q} A=m.\]
\end{conjecture}

The second observation is that, firstly, the $2\times2$ minors have coefficients in $\F_q$ because
the kernel basis matrices $K_t$ have entries in $\F_q$; and secondly we only need
one rank-$1$ point whose coordinates lie in $\F_{q^m}$.  
Therefore it is
more natural to keep the expensive quotient computation over $\F_q$ and let the less expensive extraction of the desired
$\F_{q^m}$-solution for the final eigenvector evaluation step.  

The third observation is that as $A(V)$ has entries in $\F_q$, it commutes with $\sigma$. Denote  $\sigma(W)$ as the entrywise application of $\sigma$ to $W$, then we have
\[
A(V)\sigma(W)=\sigma(A(V)W).
\]
In consequence, $\ker_{\F_{q^m}}(A(V)) $ is Frobenius-stable. In other words, if
\[
P=(p_1,\dots,p_{\rho-1})\in \mathbf V_{\F_{q^m}}(I),
\]
then, for $0\le i<m$,
\[
P^{[q^i]} := (p_1^{q^i},\dots,p_{\rho-1}^{q^i})
   \in \mathbf V_{\F_{q^m}}(I).
\]
Thus $P,P^{[q]},\dots,P^{[q^{m-1}]}$ are Frobenius-conjugate zeros of
$I$.  Under \Cref{conj:dim_ker_m}, the quotient
has degree $m$, exactly the size of the orbit.  This structure enables us to compute Macaulay row reductions over the small field
$\F_q$ avoiding the cost of linear algebra over $\F_{q^m}$.\\

\noindent\textbf{The eigenvalue method for zero-dimensional
ideals.} First, we recall a well known result in commutative algebra:
\begin{theorem}[Finiteness Theorem]
  \label{theo:finiteness}
  
\noindent Given a field $F$ and polynomials $f_1,\ldots,f_s \in F[x_1,\ldots,x_n]$, the system
\begin{equation}
\label{syspolys}
f_1 = \ldots = f_s = 0
\end{equation}
has finitely many solutions over the algebraic closure $\overline{F}$ of $F$ if and only if 
\[
A = F[x_1,\dots,x_n]/\langle f_1,\dots,f_s\rangle
\]
has finite dimension over $F$.
\end{theorem}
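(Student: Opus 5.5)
The plan is to prove both implications separately, with Hilbert's Nullstellensatz over $\overline F$ as the main tool, and to reduce to the univariate case one variable at a time. Set $I=\langle f_1,\dots,f_s\rangle\subseteq F[x_1,\dots,x_n]$ and let $\overline I=I\,\overline F[x_1,\dots,x_n]$ be its extension. The first step is to observe that $\overline F[x]/\overline I\cong \overline F\otimes_F A$. Indeed, $\overline F\otimes_F -$ is exact and $\overline F\otimes_F F[x]=\overline F[x]$, so $\dim_{\overline F}\overline F[x]/\overline I=\dim_F A$. This lets us work entirely over the algebraically closed field, where the solution set of \eqref{syspolys} is exactly $\mathbf V(\overline I)$.

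\textbf{Finite dimension implies finitely many solutions.} Assume $\dim_F A=d<\infty$. For each variable $x_i$, the $d+1$ classes $1,x_i,\dots,x_i^d$ are linearly dependent in $A$, so there is a nonzero univariate polynomial $p_i(x_i)\in I$. Every solution $a\in\overline F^n$ satisfies $p_i(a_i)=0$, so each coordinate $a_i$ takes at most $\deg p_i$ values. Hence the number of solutions is at most $\prod_i\deg p_i$, which is finite. This direction is routine.

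\textbf{Finitely many solutions implies finite dimension.} Assume $\mathbf V(\overline I)=\{a^{(1)},\dots,a^{(N)}\}$ is finite. For each $i$, set $g_i(x_i)=\prod_{k=1}^N (x_i-a^{(k)}_i)\in\overline F[x_i]$. It vanishes on $\mathbf V(\overline I)$, so the Nullstellensatz gives an exponent $e_i$ with $g_i^{e_i}\in\overline I$. Each $g_i^{e_i}$ is monic in $x_i$ of degree $D_i=Ne_i$. Reducing modulo these polynomials shows that $\overline F[x]/\overline I$ is spanned by the monomials $x^\beta$ with $\beta_i<D_i$. That is a finite spanning set, so $\dim_{\overline F}\overline F[x]/\overline I<\infty$, and by the first step $\dim_F A<\infty$. (Alternatively, one can descend to $F$ directly: since $\overline I\cap F[x]=I$ by faithful flatness, the monic polynomials with coefficients in $F$ can be taken as norms or products of Galois conjugates, but the tensor argument avoids this.)

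The main obstacle is the second implication, and specifically the passage from the vanishing of $g_i$ on the finite variety to a polynomial genuinely lying in the ideal. This is exactly where the Nullstellensatz, and hence algebraic closure, is needed; the base change $\overline F\otimes_F A$ is what transports the conclusion back to $F$. Everything else is linear algebra on monomials.
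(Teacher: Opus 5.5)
Your proof is correct, but it takes a somewhat different route from the one the paper relies on. The paper gives no argument of its own. It cites Theorem~6, Chapter~5, Section~3 of Cox--Little--O'Shea, where the hard direction goes through Gr\"obner bases. There the Nullstellensatz puts a power of a univariate polynomial in $x_i$ into $I$, and its leading monomial is a pure power $x_i^{m_i}\in\langle \mathrm{LT}(I)\rangle$. Hence only finitely many standard monomials survive, and these form a basis of the quotient.

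You avoid monomial orders altogether. Reducing modulo the monic polynomials $g_i^{e_i}$ gives a finite spanning set; equivalently, $\overline F[x]/\overline I$ is a quotient of $\bigotimes_i \overline F[x_i]/(g_i^{e_i})$. This is more elementary, but it only gives the bound $\dim_F A\le\prod_i D_i$. The Gr\"obner route also produces an explicit monomial basis of $A$, which is the kind of object the multiplication matrices of Phase~3 are built on. The easy direction, via linear dependence of $1,x_i,\dots,x_i^d$ in $A$, is identical in both arguments.

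Your base-change step $\overline F\otimes_F A\cong\overline F[x]/\overline I$ is not a detour. The cited theorem is formulated over an algebraically closed base field. The statement here, and its use with $F=\F_q$, compares $\dim_F A$ with solutions over $\overline F$, so exactly this descent is needed to get from the reference to the statement as written.

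One minor caveat: your parenthetical alternative via products of Galois conjugates would need an extra $p$-th power when $F$ is imperfect, since $\overline F/F$ is then not Galois. You do not use it, so nothing depends on it.
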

The proof can be found in the proof of Theorem 6, Chapter 5, Section 3 of \cite{cox1997ideals}.
\begin{theorem}[Eigenvalue Theorem]
\label{theo:ETbasic}
Let $f \in F[x_1,\dots,x_n]$ be a polynomial with a
multiplication map
$M_f : A \longrightarrow A.
$
When $\dim_F A < \infty$, the eigenvalues of $M_f$ are the values of $f$ at the finitely many solutions of \Cref{syspolys} over $\overline{F}$.
\end{theorem}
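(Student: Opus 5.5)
The plan is to prove the Eigenvalue Theorem (\Cref{theo:ETbasic}) through the structure of the finite-dimensional algebra $A$ and its localizations, after extending scalars to $\overline F$. Set $\bar A=A\otimes_F\overline F=\overline F[x_1,\dots,x_n]/I\overline F[x]$. The map $M_f$ extends to $\bar A$ with the same matrix in any $F$-basis of $A$, so it has the same characteristic polynomial and the same eigenvalues in $\overline F$. By \Cref{theo:finiteness}, the zero set $\mathbf V(I)\subset\overline F^{\,n}$ is a finite set $\{p_1,\dots,p_N\}$.

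\textbf{Step 1: each value $f(p_i)$ is an eigenvalue.} Consider the transpose map $M_f^\top$ on the dual $\bar A^*$. The evaluation functional $\mathrm{ev}_{p_i}:\bar A\to\overline F$, $g\mapsto g(p_i)$, is well defined because $p_i\in\mathbf V(I)$, and it is nonzero since $\mathrm{ev}_{p_i}(1)=1$. For every $g$,
\[
(M_f^\top\mathrm{ev}_{p_i})(g)=\mathrm{ev}_{p_i}(fg)=f(p_i)\,\mathrm{ev}_{p_i}(g).
\]
Hence $\mathrm{ev}_{p_i}$ is an eigenvector of $M_f^\top$ with eigenvalue $f(p_i)$. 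Since a map and its transpose have the same eigenvalues, $f(p_i)$ is an eigenvalue of $M_f$.

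\textbf{Step 2: every eigenvalue is some $f(p_i)$.} Suppose $\lambda\neq f(p_i)$ for all $i$. We show that $M_f-\lambda\,\mathrm{Id}=M_{f-\lambda}$ is invertible. Let $J=I\overline F[x]+\langle f-\lambda\rangle$. The polynomial $f-\lambda$ vanishes at no point of $\mathbf V(I)$, so $\mathbf V(J)=\emptyset$. By the weak Nullstellensatz over the algebraically closed field $\overline F$, $1\in J$. Thus $1=a+b(f-\lambda)$ with $a\in I\overline F[x]$, so $b$ is an inverse of $f-\lambda$ in $\bar A$. Then $M_b$ inverts $M_{f-\lambda}$, and $\lambda$ is not an eigenvalue.

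Combining both steps gives exactly the stated equality of sets. The main obstacle is Step~2, which is the only place needing a nontrivial input, namely the Nullstellensatz; everything else is formal linear algebra. To avoid citing the full Nullstellensatz, one could decompose $\bar A$ into its local components $\bar A_{\mathfrak m_{p_i}}$ (Chinese remainder theorem for the finitely many maximal ideals). On each component $f-f(p_i)$ is nilpotent, so $M_f$ is block triangularizable with diagonal entries $f(p_i)$. However, I expect the Nullstellensatz route to be shortest, citing \cite{cox1997ideals} as for \Cref{theo:finiteness}.
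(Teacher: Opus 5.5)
Your proof is correct and is essentially the standard argument the paper defers to (it gives no proof of its own and cites Theorem~4.5 of Chapter~2, \S4 of \cite{cox2005using}): the Weak Nullstellensatz over $\overline F$ makes $M_f-\lambda\,\mathrm{Id}$ invertible whenever $\lambda\notin f(\mathbf V(I))$, and evaluation at a point $p\in\mathbf V(I)$ (your eigenvector $\mathrm{ev}_p$ of $M_f^\top$) shows that $f(p)$ is an eigenvalue. Your preliminary base change to $\overline F[x_1,\dots,x_n]/I\,\overline F[x_1,\dots,x_n]$ correctly handles the paper's formulation over an arbitrary field $F$ rather than over $\mathbb{C}$.
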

The proof can be found in the proof of Theorem 4.5, Chapter 2, Section 4 of \cite{cox2005using}.

This finiteness is
essential for the eigenvalue method. \Cref{theo:ETbasic}
applies to multiplication maps in the quotient algebra
\[
A=\F_q[\alpha[1],\dots,\alpha[\rho-1]]/I,
\]
and a finite multiplication matrix $M_f$ exists only when this quotient has a
finite $\F_q$-basis.  By the Finiteness Theorem, this is equivalent to the
normalized ideal $I$ being zero-dimensional, or equivalently to the affine
rank-$1$ equations having only finitely many solutions over
$\overline\F_{q}$.  If $I$ had a positive-dimensional component, then $A$
would be infinite-dimensional. In our application, this finiteness hypothesis is supplied by
\Cref{conj:dim_ker_m}.  

Next, we build the multiplication matrix of $A$ with a separating coordinate (for exact definition and properties, refer to Proposition 4.1, Section 4, Chapter 2, in \cite{cox2005using}), i.e., compute the linear map: 
\begin{align*}
M_{\alpha[s]}:A &\longrightarrow A\\
f&\longmapsto \alpha[s] f.
\end{align*}
Let $
\mathcal B=(b_1,\dots,b_{m})$ be a basis of $A$. For an eigenvalue $\lambda\in\F_{q^m}$, an eigenvector of
$M_{\alpha[s]}^{\mathsf T}$, normalized so that its entry corresponding to
$1$ is $1$, gives
$
(b_1(P),b_2(P),\dots,b_{m}(P))
$.
After normalization, this reconstructs
$\alpha(P)$, hence the rank-$1$ matrix $W(\alpha)$. The cost of this eigenvalue extraction is polynomial. 

 Algorithms~\ref{alg:guessV_phase3} and~\ref{alg:residual_eigen} summarize
the two stages of Phase 3. Except for Step 18 of Algorithm~\ref{alg:residual_eigen} being done over $\F_{q^m}$, all operations take place over $\F_q$. This explains the advantage of the eigenvalue method in comparison with a naive Gr\"obner-basis computation over $\F_{q^m}$. The total cost of Phase 3 is dominated by the cost of computing the basis of $A$ over $\F_q$ and is estimated as
\begin{equation}\label{eq:minrank_cost}
\mathcal O\!\left(\tbinom{\rho+1}{2}^{\omega}\right)
=\mathcal O\!\left(\tbinom{m+(k+1)\ell_1+1}{2}^{\omega}\right)
\end{equation}
where $\omega$ is the linear algebra exponent.

\begin{remark}
\label{remark:fglm}
In Algorithm~\ref{alg:guessV_phase3}, we can reduce the number of variables from $\rho-1$ to $m-1$ before applying the eigenvalues extraction. Recall that the Gaussian elimination on the degree-$2$ Macaulay matrix $\mathcal M$
produces a linear system of rank
$(\rho-1)-(m-1)$ on the degree-$1$ variables $\alpha$. Under \Cref{conj:dim_ker_m} the kernel of $\mathcal M$ has
dimension exactly $m$, so the linearized system has a
$d = (m-1)$-dimensional affine solution family
\begin{equation*}\label{eq:affine_parametrization_main}
\alpha \;=\; \tilde\alpha_0 \;+\; \sum_{i=1}^{d} c[i]\,\tilde\alpha_i \; \text{ for } c=(c[1],\dots,c[d])\in \Fqm^d,
\end{equation*}
where the vectors $\tilde\alpha_i\in \F_q^{\rho-1}$ are obtained from the row-reduced
echelon form of $\mathcal M$ and hence have entries in $\F_q$. The
residual quadratic system $$J\subset \F_q[c[1],\dots,c[d]]$$ is generated by quadratic
polynomials in $c[1],\dots,c[d]$ whose coefficients lie in $\F_q$. For the correct $V$ the zero-set $\mathbf V_{\F_q}(J')$ is finite of
cardinality $\dim_{\F_q} A = m$, where $A=\F_q[c[1],\dots,c[d]]/J'$ is
the affine coordinate ring. This count matches the kernel nullity
observed experimentally in Appendix~\ref{appendix:macaulay}.
\end{remark}

\par\noindent
\begin{minipage}{\textwidth}
\SetAlgoSkip{}
\begin{algorithm}[H]
\fontsize{8}{9}\selectfont
\LinesNumbered
\DontPrintSemicolon
\caption{$\mathsf{RankOne}$}
\label{alg:guessV_phase3}
\KwIn{Kernel basis matrices $K_1,\ldots,K_\rho$.}
\KwOut{$(\alpha,W)$ with $W=\sum_t\alpha[t]K_t$,
$\rank_{\F_{q^m}} W=1$, or $\perp$.}
\For{$t_0=1,\ldots,\rho$}{
  \tcp{Normalize.}
  $\alpha[t_0]\gets1$\nllabel{step:rankone_normalize}\;
  $\mathcal M\gets\mathsf{Mac}_2(\sum_t\alpha[t]K_t)$\nllabel{step:rankone_linearize}\tcp{Build the degree-2 Macaulay matrix.}
  $\mathcal T\gets\mathsf{EigenvaluesSolve}(\mathcal M)$\nllabel{step:rankone_extract}\;
  \tcp{Verify and return an accepted candidate.}
  \ForEach{$x\in\mathcal T$}{
    Reconstruct $\alpha$ by inserting $\alpha[t_0]=1$ into
    $x$\nllabel{step:rankone_reconstruct}\;
    \lIf{$\rank_{\F_{q^m}} W=1$}{\Return $(\alpha,W)$\nllabel{step:rankone_accept}}
  }
}
\Return $\perp$\;
\end{algorithm}

\vspace{6pt}
\begin{algorithm}[H]
\fontsize{8}{9}\selectfont
\LinesNumbered
\DontPrintSemicolon
\caption{$\mathsf{EigenvaluesSolve}$}
\label{alg:residual_eigen}
\KwIn{A degree-$2$ Macaulay matrix $\mathcal M$.}
\KwOut{A list of vectors, or $\varnothing$.}
\tcp{Compute the quotient basis.}
Split $\mathcal M=[A\mid B]$, where columns of $A$ correspond to the constant and degree-$1$ monomials, and columns of $B$ correspond to the degree-$2$ monomials\nllabel{step:eigen_reduce}\;
Row-reduce $B$ to get $\widetilde B$; applying the same row operations to $A$ to get $\widetilde A$ \;
$\mathcal M \gets [\widetilde{A}\mid \widetilde{B}]$\;
Split $\widetilde B$ into $\widetilde B_0$, containing its zero rows,
and $\widetilde B_1$, containing its nonzero rows\;
Let $L_1,L_2$ be the rows of $\widetilde A$ corresponding to
$\widetilde B_0,\widetilde B_1$, respectively\;
$L_1 \gets [c \mid D]$ where $c$ is the first column of $L_1$, containing the constant terms; $D$ its remaining $\rho-1$ columns, containing the coefficients of $\alpha[t]_{t\ne t_0}$, in increasing index order\;
Compute $x$ in $Dx=-c$ \;
Compute $y$ in $\widetilde B_1 y=-L_2\begin{pmatrix}1\\x\end{pmatrix}$\nllabel{step:eigen_relations}\;
$\mathcal B\gets\{1\}$\nllabel{step:eigen_basis}\;
\ForEach{monomial $b$ in $x$ or $y$}{
  \lIf{$b$ is linearly independent of $\mathcal B$}{append $b$ to $\mathcal B$}
}
$\delta\gets|\mathcal B|$ \tcp{ From Conjecture \ref{conj:dim_ker_m}, we expect  $\delta = m$.}
\lIf{no such basis is determined at degree $2$, or $\delta=0$}{\Return $\varnothing$\nllabel{step:eigen_basis_check}}
\tcp{Compute the eigenvalues.}
Write each entry of $x$ as a linear combination of
$\mathcal B=(1,b_2,\ldots,b_\delta)$, and store its coefficients in
the corresponding row of $C$, so that
$x=C(1,b_2,\ldots,b_\delta)^{\mathsf T}$\nllabel{step:eigen_coordinates}\;
Choose an entry $x[s]$ of $x$\nllabel{step:eigen_choose}\;
\ForEach{monomial $b_j$ in $\mathcal B$, with $b_1=1$}{
  Write $x[s]b_j$ as a linear combination of $\mathcal B$, and
  store its coefficients in column $j$ of
  $T\in\F_q^{\delta\times\delta}$\nllabel{step:eigen_multiplication}\;
}
Compute the eigenvalues $\lambda\in \F_{q^m}$ of $T^{\mathsf T}$ and corresponding eigenvectors $v$\;
\tcp{Recover coordinate vectors $x$.}
$\mathcal T\gets[\,]$\;
\ForEach{computed eigenvector $v$ with $v[1]\ne0$}{
  $v\gets v/v[1]$\nllabel{step:eigen_normalize}\tcp{Make the constant entry $1$.}
  $x^\star\gets Cv$\nllabel{step:eigen_readout}\tcp{Recover the original coordinates.}
  Append $x^\star$ to $\mathcal T$\nllabel{step:eigen_append}\;
}
\Return $\mathcal T$\;
\end{algorithm}
\end{minipage}
\par\medskip
\FloatBarrier

\subsubsection{Phase 4: Factorization and reconstruction.}
Once a rank-1 matrix $W$ is found, we resconstruct the row compression matrix $U$ using Algorithm \ref{alg:guessV_phase4}. The distinguisher succeeds if the algorithm recovers a \emph{full-rank} matrix $U$. 
\begin{remark}
  Given a successful rank-1 extraction from Phase 3, the probability of $U$ begin full-rank is overwhelmingly high (Step 5 in Algorithm \ref{alg:guessV_phase4}). This probability is analyzed in Section \ref{sec:prob_rank_def_U}.
\end{remark}

\begin{algorithm}[htbp]
\fontsize{8}{9}\selectfont
\LinesNumbered
\DontPrintSemicolon
\caption{Phase 4: reconstruct and verify $U$}
\label{alg:guessV_phase4}
\KwIn{$(C_i)_{i=1}^{mk}$, $V$, a rank-$1$ matrix
$W\in \F_{q^m}^{n'\times M}$, $\Psi$.}
\KwOut{$U$, or $\perp$.}
Choose $(j_0,s_0)$ with $W[j_0,s_0]\ne0$\;
$u[s]\gets W[j_0,s]$ for $s=1,\ldots,M$\;
$h[j]\gets W[j,s_0]/u[s_0]$ for $j=1,\ldots,n'$
\tcp*{$W=hu^{\mathsf T}$}
$U\gets[\Psi(u[1])\mid\cdots\mid\Psi(u[M])]\in F^{m\times M}$\;
\lIf{$\rank_F U<m$}{\Return $\perp$}
\Return $U$\;
\end{algorithm}

\subsection{Guess-$U$-solve-$V$ direction}
\label{sec:guess_U_solve_V_rank1}
The same idea can be run in the opposite direction.  One may guess the row compression $U$ and solve algebraically for a full column compression
$
V\in\F_q^{N\times m}$.  For simplicity, and as in the concrete EGMC
parameter sets considered below, we specialize this direction to $n=m$, so
$N=m+\ell_2$.  
\subsubsection{Phase 1: Guessing $U$.}
Here, the probability of guessing the correct $U$ is $$\frac{|\Gr(m,m)|}{|\Gr(m,m+\ell_1)|}
\approx q^{-m\ell_1}.$$

\subsubsection{Phase 2: Algebraic modelling.}
Given a valid guess $U \in\F_q^{m\times(m+\ell_1)}$, let $C_1,\dots,C_{mk}$ be an $\F_q$-basis of the public code
$\mathcal C$. For each public basis matrix set
$
E_i=UC_i=(E_i[a,r])\in\F_q^{m\times N}.
$
We want to find a full-rank column compression
\[
V=
\begin{bmatrix}
v_1\\
\vdots\\
v_N
\end{bmatrix}
\in\F_q^{N\times m}
\text{ where } v_1,\dots,v_N\in\F_q^m,
\]
such that the compressed code $UC_iV$ is annihilated by a full parity-check
matrix of the hidden $[m,k]$ Gabidulin code. As in the previous direction, we write the unknown rows of $V$ into extension
field coordinates:
\[
y[r]=\Psi^{-1}(v_r)\in\Fqm
\quad \forall r=1,\dots,N .
\]
Fix a row-wise extension of $\Psi$, let
$\row_{\Fqm}$ denote the inverse row expansion on $m$ rows:
\begin{align*}
\row_{\Fqm}=(\Psi^{-1})^{\oplus m}:
\F_q^{m\times m} &\longrightarrow \Fqm^m\\
\begin{bmatrix}
x_1\\
\vdots\\
x_m
\end{bmatrix}
&\longmapsto
\bigl(\Psi^{-1}(x_1),\dots,\Psi^{-1}(x_m)\bigr).
\end{align*}
\noindent Since $E_i[a,r]\in\F_q$, we have
\[
\bigl(\row_{\Fqm}(E_iV)\bigr)_a
   = \sum \limits_{r=1}^{N} E_i[a,r]\,y[r].
\]
Let $s=m-k$, and let $H\in\Fqm^{s\times m}$ be a parity-check matrix.  Since left multiplication by any matrix in
$\GL_s(\Fqm)$ gives the same parity-check space, we normalize
$
H=[I_s\mid B]$ where $B=(B[b,a])\in\Fqm^{s\times k}$.
The parity-check condition is
\begin{align}
  \label{eq:matrixH_bilinear}
&H\,\row_{\Fqm}(E_iV)^{\mathsf T}=0,
\quad  \forall i=1,\dots,mk \notag \\
\implies &\sum_{r=1}^{N} E_i[b,r]\,y[r]
+
\sum_{a=1}^{k}\sum_{r=1}^{N}
E_i[s+a,r]\,B[b,a]\,y[r]
=0 
\end{align}for $1\le i\le mk$ and $1\le b\le s$. Let us denote
$
x[b,a,r]=B[b,a]\,y[r]
$ for $1\le b\le s, 1\le a\le k, 1\le r\le N $.
Then \eqref{eq:matrixH_bilinear} becomes the linear system
\begin{equation}
\label{eq:matrixH_linear}
\sum_{r=1}^{N} E_i[b,r]\,y[r]
+
\sum_{a=1}^{k}\sum_{r=1}^{N}
E_i[s+a,r]\,x[b,a,r]
=0
\quad \forall 
i=1,\dots,mk, \; b=1,\dots,s.
\end{equation}
All coefficients lie in $\F_q$, so its solution space over $\Fqm$ is the
scalar extension of its kernel over $\F_q$. Hence, we are back to the problem of finding a rank-$1$ element in the kernel of dimension 
\[
\rho_H
=
m+\ell_2\bigl(1+k(m-k)\bigr).
\]

\subsubsection{Phase 3: Rank-$1$ solution extraction.}

We therefore compute a basis $b_1,...,b_{\rho_H}$ of the $\F_q$-kernel of
\eqref{eq:matrixH_linear}, set
$
K_t=\operatorname{mat}_{1+sk,N}(b_t)\in\F_q^{(1+sk)\times N},
$
and set to zero all the $2\times2$ minors of
\[
Z(\alpha)=\sum_{t=1}^{\rho_H}\alpha[t] K_t .
\]
After one projective normalization, the residual quotient is also conjectured to be
zero-dimensional with supporting experimental evidence given in \Cref{appendix:macaulay_guessU}, so the same eigenvalue method as in
the guess-$V$-solve-$U$ direction applies.  The rank-$1$ extraction here uses a Macaulay matrix with
$\binom{\rho_H+1}{2}$ columns after projective normalization, hence cost
\[
\mathcal O\!\left(\binom{\rho_H+1}{2}^{\omega}\right)
\]
over $\mathbb{F}_q$ where $\omega$ is the linear algebra exponent.

\subsubsection{Phase 4: Factorization and reconstruction.}
Once a rank-$1$ point $Z$ is found, the first row is $
y=(y[1],\dots,y[N]).
$
The candidate column compression is recovered by
\[
V=
\begin{bmatrix}
\Psi(y[1])^{\mathsf T}\\
\vdots\\
\Psi(y[N])^{\mathsf T}
\end{bmatrix}
\in\F_q^{N\times m}.
\]
If $y[r_0]\ne0$, the entries of the normalized parity-check matrix are
recovered from
$
B[b,a]=x[b,a,r_0]/y[r_0].
$
The candidate is accepted only if $V$ has rank $m$ over~$\F_q$.

\section{Recovering the equivalent secret keys}
\label{sec:code_extension}
When the distinguisher succeeds, we are given a matrix code $
\mathcal{B}_0=U_0\mathcal C V_0
$
where $\mathcal C$ is the public EGMC code, $U_0\in \F_q^{m\times (m+\ell_1)}$ and
$V_0\in \F_q^{(n+\ell_2)\times n'}$ are full-rank matrices.
In the guess-$U$-solve-$V$ direction, we have $n' =n = m$, thus $
\mathcal{B}_0$ is equivalent to the matrix representation of an $[n,k]$ Gabidulin code over $\F_{q^m}$.  Nevetheless, in the guess-$V$-solve-$U$ direction, $n' = k+1 < n$, thus $
\mathcal{B}_0
$ is equivalent to the matrix representation of just a $[k+1,k]$ Gabidulin code
over $\F_{q^m}$. This is a punctured code of the $[n,k]$ Gabidulin code
over $\F_{q^m}$. Our goal now is to extend this short punctured Gabidulin code into a full equivalent $[n,k]$ Gabidulin description of the row-compressed public code. This section therefore applies to only the guess-$V$-solve-$U$ direction.

It should be noted that a punctured generalized Reed-Solomons code can be extended by 1 digit while keeping the MDS property \cite{seroussi2003mds}. In an EGMC instance, the structure of the punctured Gabidulin code, together with the full row-compressed public code, instead lets us recover the hidden clean column space and thereby return to length~$n$. Indeed, the punctured code reveals the hidden copy of $\F_{q^m}$ acting on the rows and hence the Frobenius map in the recovered coordinates. As a result, we can recover the equivalent secret keys after two steps. First, in Section~\ref{sec:reconstruct_gabidulin}, we reconstruct the punctured $[k+1,k]$ Gabidulin code in standard generator-matrix form. Finally, as the generator-matrix of Gabidulin codes is special, Section~\ref{sec:extension} shows an algorithm to extend the $[k+1,k]$ Gabidulin code  back to the full $[n,k]$ Gabidulin code. The whole process is entirely polynomial-time linear algebra over $\F_q$ and $\F_{q^m}$.

\subsection{Reconstruct the $[k+1,k]$ Gabidulin code over $\F_{q^m}$ in its standard form}
\label{sec:reconstruct_gabidulin}
Assume that we are given
$
\mathcal{B}_0=U_0 \mathcal C V_0 \subseteq \Mat_{m\times (k+1)}(\F_q)
$
of $\F_q$-dimension $mk$, and that $\mathcal{B}_0$ is equivalent to the
matrix representation of a $[k+1,k]$ Gabidulin code over $\F_{q^m}$. Using the \texttt{BasisRecover} Algorithm from \cite{PorwalWL2026ACDGV}, we can find in polynomial time a basis $\gamma$ of $\F_{q^m}/\F_q$ and an evaluation vector
$
\mathbf g'=({g'}_1,\dots,{g'}_{k+1})\in \F_{q^m}^{k+1}
$
with $\F_q$-linearly independent coordinates such that
$
\mathcal B_0=\Psi_\gamma\!\bigl(\mathcal G_k(\mathbf g')\bigr)$. 
We modified the \texttt{BasisRecover} algorithm so that it also returns the standard generator matrix in Moore form (see Definition 7, \cite{augot2021rank}) of the $[k+1,k]$ Gabidulin code, which will be used in the next step to extent the code to its full length.

\[
M_{k+1,k}(\boldsymbol g')=
\begin{bmatrix}
{g'}_1 & {g'}_2 & \cdots & {g'}_{k+1}\\
{g'}_1^q & {g'}_2^q & \cdots & {g'}_{k+1}^q\\
\vdots & \vdots & \ddots& \vdots\\
{g'}_1^{q^{k-1}} & {g'}_2^{q^{k-1}} & \cdots & {g'}_{k+1}^{q^{k-1}}
\end{bmatrix}.
\]

\begin{algorithm}[htbp]
\fontsize{8}{9}\selectfont
\LinesNumbered
\DontPrintSemicolon
\caption{\texttt{BasisRecover} \cite{PorwalWL2026ACDGV}}
\label{alg:basis_recover_moore}
\KwIn{An $\F_q$-basis $B_1,\ldots,B_{mk}$ of
$\mathcal B_0\subseteq\F_q^{m\times (k+1)}$.}
\KwOut{$\gamma$, $\Phi_\gamma$, $\mathbf g'$, and
$M_k(\mathbf g')$, or $\perp$.}
\BlankLine
$\mathcal L \gets\operatorname{Stab}(\mathcal B_0)
=\{T\in\F_q^{m\times m}:T\mathcal B_0\subseteq\mathcal B_0\}$\;
\lIf{$\dim_{\F_q}\mathcal L\ne m$}{\Return $\perp$}
sample $T\in\mathcal S$ until its minimal polynomial
$f_T(X)\in\F_q[X]$ has degree $m$, using at most $\left(1-2q^{-m \over 2}+2q^{-m}\right)^{-1} \le 2$ trials\;
\lIf{no such $T$ is found, or $f_T$ is reducible}{\Return $\perp$}
choose $0\ne w\in\F_q^m$ and set
$R_T\gets[\,w\mid Tw\mid\cdots\mid T^{m-1}w\,]\in\GL_m(\F_q)$\;
find a root $\vartheta\in\Fqm$ of $f_T$\;
$\gamma\gets
\left(1, \vartheta, \vartheta^2, \dots, \vartheta^{m-1}\right) R_T^{-1} \in \F_{q^m}^m$\;
$\mathcal C_\gamma\gets\Psi_\gamma^{-1}(\mathcal B_0)
\subseteq\Fqm^{k+1}$\;
apply the standard Overbeck algorithm to
$\mathcal C_\gamma$ and obtain a vector
$\mathbf g'\in\Fqm^{k+1}$ with $\F_q$-linearly independent coordinates\;
$M_{k+1,k}(\mathbf g') \gets
[\,\mathbf g';(\mathbf g')^{[q]};\ldots;
(\mathbf g')^{[q^{k-1}]}\,]$\;
\Return $(\gamma,\mathbf g',M_{k+1,k}(\mathbf g'))$\;
\end{algorithm}

\subsection{Extending the punctured code to a full length-$n$ Gabidulin code}
\label{sec:extension}
Let $\mathcal D=U_0\mathcal C\subseteq\F_q^{m\times (n+\ell_2)} $ be the row-compressed public code. From previous step, we have: 
\[
 \Psi_\gamma^{-1}(\mathcal D)V_0=\mathcal G_k(\mathbf g')
 \subseteq\F_{q^m}^{k+1}.
\]
\noindent Next, we define the puncture map
\begin{align*}
 \label{eq:puncture_isomorphism}
 \pi:\Psi_\gamma^{-1}(\mathcal D)
 &\longrightarrow\mathcal G_k(\mathbf g') \\
  b&\longmapsto bV_0 \;.
\end{align*}
We have $\dim_{\F_q}\Psi_\gamma^{-1}(\mathcal D)$ is at most $mk$ over $\F_q$ because it is a linear image of the public code $\mathcal C$. The punctured Gabidulin code $\mathcal G_k(\mathbf g')$ also has dimension $mk$ over $\F_q$. Since $\pi$ is surjective, it is an $\F_q$-linear isomorphism. Thus, we can compute $b_0,...,b_{k-1} \in \Psi_\gamma^{-1}(\mathcal D) \subset \F_{q^m}^{n+\ell_2}$ uniquely such that 
\begin{equation}
 \label{eq:lifted_moore_rows}
 b_iV_0=(\mathbf g')^{[q^i]} ,
\end{equation}

\noindent then we seek a full-rank matrix $V \in \F_q^{(n+\ell_2) \times n}$ and a vector $\mathbf g \in \F_{q^m}^n$ of $\F_q$-rank $n$ satisfying such that:
\begin{equation}
 \label{eq:V_moore_rows}
 b_iV=(\mathbf g)^{[q^i]},
\end{equation}

Recall that the construction of EGMC and the assumption that $U_0$ has been correctly guessed ensure the existence of a solution $V$ having the form 
\begin{equation*}
  V = Q^{-1} \begin{bmatrix} * \\ 0 \end{bmatrix} \text{ where $Q\in \GL_{n+\ell_2}(\F_q)$ and}
\end{equation*}
\begin{equation*}
 \begin{bmatrix}b_0\\\vdots\\b_{k-1}\end{bmatrix} Q^{-1}=\begin{bmatrix}M_{n,k}(\mathbf g)& |& \Delta \end{bmatrix} \text{  where } \Delta \text{ is a random matrix in $\F_{q^m}^{k\times\ell_2}$.}
\end{equation*}

\noindent Note that for any vectors $x\in\F_q^{n+\ell_2}$, we have $$\langle b_0,x \rangle^{q^i}=\langle b_0^{[q^i]},x \rangle.$$ Thus,  for $\delta_i$ being rows of $\Delta$, writing $b_i$ in $\F_q$ basis via $\Psi_\gamma$ gives us:
\begin{equation*}
 \begin{bmatrix} \Psi_\gamma(b_1 - b_0^{[q]})\\ \vdots \\ \Psi_\gamma(b_{k-1} - b_0^{[q^{k-1}]}) \end{bmatrix}Q^{-1}=\left[
\begin{array}{c|c}
0 &
\begin{matrix}
\Psi_\gamma(\delta_1-\delta_0^{[q]})\\
\vdots\\
\Psi_\gamma(\delta_{k-1}-\delta_0^{[q^{k-1}]})
\end{matrix} 
\end{array}
\right] = \left[0 | \widetilde{\Delta}\right].
\end{equation*}
\noindent For $m(k-1) > \ell_2$, with high probablity, $\widetilde{\Delta}$ is full-rank $\ell_2$ over $\F_q$, thus 
\[
\rank \begin{bmatrix} \Psi_\gamma(b_1 - b_0^{[q]})\\ \vdots \\ \Psi_\gamma(b_{k-1} - b_0^{[q^{k-1}]}) \end{bmatrix} = \ell_2.
\]
\noindent Combining with Equation~\eqref{eq:V_moore_rows}, the column space of $V$ is
\[
\mathcal L
=
\left\{
v\in \F_q^{n+\ell_2}:
\langle b_{i}-b_{0}^{[q^i]}, v\rangle=0 \text{ for }
i=1,\dots,k-1
\right\},
\] 
and its dimension is $n$. Compute any $\F_q$-basis $v_1,\ldots,v_n$ of $\mathcal L$, and setting
\begin{align*}
 \label{eq:extended_V_and_g}
 V&=[\,v_1\mid\cdots\mid v_n\,]\in\F_q^{N\times n}\\
 \mathbf g&=\bigl(\langle b_0,v_1\rangle,\ldots,\langle b_0,v_n\rangle\bigr)
 \in\F_{q^m}^n.
\end{align*} gives us the final solutions.

Algorithm~\ref{alg:extend_support} summarizes the extension procedure. The cost of the algorithm is polynomial-time linear algebra over $\F_q$ and
$\F_{q^m}$.

\begin{algorithm}[htbp]
\fontsize{8}{9}\selectfont
\LinesNumbered
\DontPrintSemicolon
\caption{\texttt{ExtendSupport}: extension to a full Gabidulin code}
\label{alg:extend_support}
\KwIn{A code $\mathcal C\subseteq\F_q^{(m+\ell_1)\times(n+\ell_2)}$; full-rank
matrices $U_0\in\F_q^{m\times(m+\ell_1)}$ and
$V_0\in\F_q^{(n+\ell_2)\times(k+1)}$; and
$\gamma,\mathbf g'$ satisfying
$U_0\mathcal C V_0=\Psi_\gamma(\mathcal G_k(\mathbf g'))$.}
\KwOut{A full-rank $V\in\F_q^{(n+\ell_2)\times n}$ and an evaluation
vector $\mathbf g\in\F_{q^m}^n$ of $\F_q$-rank $n$, or $\perp$.}
\BlankLine
$N\gets n+\ell_2$ and
$d_s\gets\Psi_\gamma^{-1}(U_0C_s)\in\F_{q^m}^N$ for
$1\leq s\leq mk$\;
$\widetilde{\mathcal D}\gets
\Span_{\F_q}\{d_1,\ldots,d_{mk}\}$\;
$\mathcal A\gets
\Span_{\F_q}\{d_1V_0,\ldots,d_{mk}V_0\}$\;
\lIf{$\dim_{\F_q}\mathcal A\ne mk$}{\Return $\perp$}
\For{$i=0,\ldots,k-1$}{
  solve for $b_i \in\widetilde{\mathcal D} \subset \F_{q^{m}}^N$ the system $b_iV_0 = (\mathbf g')^{[q^i]}$;
}
$\mathcal L \gets \ker_{\F_q}
[\,\Psi_\gamma(b_1-b_0^{[q]})^{\mathsf T}\mid\cdots\mid
\Psi_\gamma(b_{k-1}-b_0^{[q^{k-1}]})^{\mathsf T}\,]^{\mathsf T}$\;
\lIf{$\dim_{\F_q}\mathcal L\ne n$}{\Return $\perp$}
choose an $\F_q$-basis $v_1,\ldots,v_n$ of $\mathcal L$ and set
$V\gets[\,v_1\mid\cdots\mid v_n\,]$\;
\For{$j=1,\ldots,n$}{
  $g_j\gets \langle b_0,v_j \rangle$\;
}
$\mathbf g\gets(g_1,\ldots,g_n)$\;
\lIf{$\rank_{\F_q}(\mathbf g)\ne n$}{\Return $\perp$}
\lIf{$\widetilde{\mathcal D}V\ne\mathcal G_k(\mathbf g)$
as $\F_q$-linear spaces}{\Return $\perp$}
\Return $(V,\mathbf g)$\;
\end{algorithm}

\section{Applications to EGMC-McElice
and EGMC-Niederreiter}
\label{sec:decoding_ciphertext}
Here, using the recovered equivalent trapdoor, we explain polynomial-time decryption
algorithms for the EGMC-McEliece/Niederreiter schemes.
\subsection{Breaking the EGMC-McEliece scheme}
Let
$
\text{pk}=(M_1,\dots,M_{km})
$
be a basis of the public matrix code $\mathcal C'_{\mathrm{mat}}$, and let a ciphertext be
$
Y=\sum_{i=1}^{km}\mu_i M_i + E$ where $\rank(E)\le r.
$
Applying the recovered $U,V$ yields
\[
UYV
=
\sum_{i=1}^{km}\mu_i (UM_iV) + UEV.
\]
Note that $\rank(UEV) \le \rank(E) \le r$ and since $M_i\in \mathcal C'_{\mathrm{mat}}$, each $UM_iV$ belongs to
$
U\,\mathcal C'_{\mathrm{mat}}\,V
=
\Psi_\gamma\!\bigl(\mathcal G_k(\boldsymbol g)\bigr),
$
hence
\[
UYV \in \Psi_\gamma\!\bigl(\mathcal G_k(\boldsymbol g)\bigr) + \{Z:\rank(Z)\le r\}.
\]
Therefore, provided $r$ is within the decoding radius of the recovered
Gabidulin code, one can decode $UYV$ and recover the unique codeword
$
C^\star=\sum_{i=1}^{km}\mu_i (UM_iV)
\in \Psi_\gamma\!\bigl(\mathcal G_k(\boldsymbol g)\bigr).
$
It remains to recover the message $\mu$. Since
\[
C^\star=\sum_{i=1}^{km}\mu_i (UM_iV),
\]
and the matrices $UM_iV$ are known, $\mu$ is obtained by solving the linear
system above over $\F_q$. The solution is unique because
$
\dim_{\F_q}\bigl(U\,\mathcal C'_{\mathrm{mat}}\,V\bigr)=mk.
$

\subsection{Breaking the EGMC-Niederreiter scheme}

Let $\bar H$ be a parity-check matrix of the public matrix code $\mathcal C'_{\mathrm{mat}}$. In
the Niederreiter variant, the ciphertext is a syndrome
$
c=\bar H\,\bar e^{\,t},
$
where $\bar e=\text{Unfold}(E)$ for some matrix
$
E\in \F_q^{(m+\ell_1)\times(n+\ell_2)}
$ where $
\rank(E)\le r.
$
From the public syndrome $c$, first compute any vector
$
\bar y\in \F_q^{(m+\ell_1)(n+\ell_2)}
$
such that
$
\bar H\,\bar y^{\top}=c,
$
and let $Y=\text{Fold}(\bar y)$. Then $Y$ belongs to the affine space 
$
Y \in E + \mathcal C'_{\mathrm{mat}},
$
so there exists some $C\in \mathcal C'_{\mathrm{mat}}$ such that
$
Y=C+E.
$
Applying the recovered $U,V$ yields
\[
UYV = UCV + UEV,
\]
where $UCV\in \Psi_\gamma(\mathcal G_k(\boldsymbol g))$ and
$\rank(UEV)\le r$. Therefore, Gabidulin decoding recovers the projected error
$
E^\star = UEV.
$

To reconstruct the full error $E$, observe that $E$ must satisfy both:
\[
UEV = E^\star
\quad\text{and}\quad
\bar H\,\text{Unfold}(E)^\top = c.
\]
This is a linear system in the entries of $E$ over $\F_q$. Solving it
recovers $E$, and hence the plaintext
$
\mu=\text{Unfold}(E).
$
\section{Arguments for \Cref{conj:dim_ker_m}}
\label{sec:prob_rank_def_U}

This section gives supporting arguments for \Cref{conj:dim_ker_m} in the
guess-$V$-solve-$U$ direction; the guess-$U$-solve-$V$ direction can be argued
analogously.  First, in \Cref{prop:dimA_lower} we will prove that \[\dim_{\F_q}A\ge m.\] This also explains why the points in this hidden orbit pass the
full-rank check in Phase~4.  Second, in \Cref{prop:dimA_upper} we will prove that
$$\Pr[\dim_{\F_q}A>m]\le q^{(1-k)(m-1)+\ell_1}.$$
\subsection{Full-rank preservation from Frobenius orbits}
\label{subsec:rank_preservation_orbit}
We recall necessary background to prove that \[\dim_{\F_q}A\ge m.\] 
Firstly, recall that 
$
I\subseteq \F_q[\alpha[1],\dots,\alpha[\rho-1]]
$
is the ideal generated by the $2\times2$ minors of
$
W(\alpha)=\sum_t\alpha[t]K_t .
$
For $P=(p_1,\dots,p_{\rho-1})\in \mathbf V_{\F_{q^m}}(I)$, $W(P)$ is a rank-$1$ element of $A(V)$.  We denote $[W(P)]$ as its class modulo multiplication by
$\F_{q^m}^{\times}$.  The projective rank-$1$ solution set is denoted as
\[
\mathcal S(V)
= \Bigl\{[W(P)] : P\in \mathbf V_{\F_{q^m}}(I)\Bigr\}.
\]

Recall that the whole $\ker_{\F_{q^m}}(A(V)) $ is Frobenius-stable so does $\mathcal S(V)$.  We therefore decompose it into disjoint
Frobenius orbits:
\begin{equation}
\mathcal S(V)=\bigcup_{\nu\in T} \mathcal O_\nu ,
\label{eq:orbit_decomp}
\end{equation}
where
\[
\mathcal O_\nu
=
\{[W(P_\nu)],[W(P_\nu^{[q]})],\dots,
  [W(P_\nu^{[q^{d_\nu-1}]})]\}.
\]
Here $P_\nu$ is a representative of the orbit and $d_\nu$ is the smallest
positive integer such that
$[W(P_\nu^{[q^{d_\nu}]})]=[W(P_\nu)]$.  We introduce two following lemmas to explain how
the orbit size controls the rank of the reconstructed matrix $U$.

\begin{lemma}
\label{lem:orbit_subfield}
Let $W=h u^{\mathsf T}$ be a rank-$1$ matrix over $\F_{q^m}$.  Assume that
$[\sigma^d(W)]=[W]$ for some $d>0$ and $g=\gcd(d,m)$.  If
$u[s_0]\ne0$, then, for every $s$,
\[
\frac{u[s]}{u[s_0]}\in \F_{q^g}.
\]
In particular, if $d$ is the Frobenius orbit size of
$[W]$, then $d\mid m$.
\end{lemma}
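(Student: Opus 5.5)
The plan is to unpack the projective equality $[\sigma^d(W)]=[W]$ into a scalar equation on the coordinates of $u$. Since $W=hu^{\mathsf T}$ and $\sigma$ acts entrywise, we have $\sigma^d(W)=h^{[q^d]}(u^{[q^d]})^{\mathsf T}$, which is again rank $1$. The hypothesis gives some $\lambda\in\F_{q^m}^\times$ with $h^{[q^d]}(u^{[q^d]})^{\mathsf T}=\lambda\,h u^{\mathsf T}$.

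Two rank-$1$ matrices are proportional exactly when their column spaces agree and their row spaces agree. Comparing row spaces therefore gives $\mu\in\F_{q^m}^\times$ with $u^{[q^d]}=\mu u$. Concretely, pick a row index $j_0$ with $h[j_0]\ne0$; then $h[j_0]^{q^d}\ne0$ too, since $\sigma$ is injective. Reading row $j_0$ gives $u[s]^{q^d}=\lambda h[j_0]h[j_0]^{-q^d}u[s]$ for all $s$, so we may take $\mu=\lambda h[j_0]h[j_0]^{-q^d}$.

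Now set $r_s=u[s]/u[s_0]$. Dividing the relation for index $s$ by the one for $s_0$ (note $u[s_0]^{q^d}\ne0$) cancels $\mu$ and gives $r_s^{q^d}=r_s$. Hence $r_s$ lies in the fixed field of $\sigma^d$ inside $\F_{q^m}$. That fixed field is $\F_{q^d}\cap\F_{q^m}=\F_{q^{\gcd(d,m)}}=\F_{q^g}$: the element $x$ is fixed by both $\sigma^d$ and $\sigma^m$ (the latter trivially in $\F_{q^m}$), hence by $\sigma^{g}$ via Bézout, and the converse inclusion is clear. This proves the first claim.

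For the ``in particular'' part, let $d$ be the orbit size, that is, the least positive integer with $[\sigma^d(W)]=[W]$. We also have $[\sigma^m(W)]=[W]$, since $\sigma^m$ is the identity on $\F_{q^m}$. The set of $e\ge1$ with $[\sigma^e(W)]=[W]$, together with $0$, is closed under differences, so it is the set of multiples of $d$; hence $d\mid m$.

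Alternatively, one can argue via the field from the first part. The ratios $r_s$, and likewise $h[j]/h[j_0]$ by the symmetric argument on columns, lie in $\F_{q^g}$. Thus $W=h[j_0]u[s_0]\,\tilde h\tilde u^{\mathsf T}$ with $\tilde h,\tilde u$ over $\F_{q^g}$, so $[\sigma^g(W)]=[W]$. Minimality of $d$ then forces $d\le g$, and since $g\mid d$ we get $d=g\mid m$.

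The only delicate point is passing from equality of projective classes to proportionality of the factors $u$ and $h$, which requires choosing a nonzero row of $h$. The rest is standard Galois theory of finite fields.
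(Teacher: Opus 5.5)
Your proof is correct and follows the same route as the paper. You read off from the rank-$1$ row space that $\sigma^d(u)$ is proportional to $u$, so the ratios $u[s]/u[s_0]$ are fixed by $\sigma^d$ and lie in $\F_{q^g}$; your exponent-set argument (and the alternative via $[\sigma^g(W)]=[W]$) spells out the orbit--stabilizer step that the paper compresses into ``$\langle\sigma\rangle$ is cyclic of order $m$, hence $d\mid m$''.
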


\begin{proof}
The equality $[\sigma^d(W)]=[W]$ means that
$\sigma^d(W)=\lambda W$ for some $\lambda\in\F_{q^m}^{\times}$.  Since
$W$ has rank $1$, its row space is the line generated by $u^{\mathsf T}$;
therefore $\sigma^d(u)$ is proportional to $u$.  Hence, for every $s$,
\[
\sigma^d\!\left(\frac{u[s]}{u[s_0]}\right)
 =
\frac{u[s]}{u[s_0]}.
\]
The fixed field of $\sigma^d$ inside $\F_{q^m}$ is
$\F_{q^g}$, with $g=\gcd(d,m)$, proving the first claim.  Then because the Frobenius group
$\mathrm{Gal}(\F_{q^m}/\F_q)=\langle\sigma\rangle$ is cyclic of order
$m$, we have $d \mid m$. 
\end{proof}

\begin{lemma}
\label{lem:rankU_le_d}
Let $u=(u[1],\dots,u[b])\in\F_{q^m}^b$ and $U(u)$ its matrix form over $\F_q$.  If there exist
$d\mid m$ such that
$u[s]\in \F_{q^d}$ for every $s$, then $\rank_{\F_q}(U(u))\le d$.
\end{lemma}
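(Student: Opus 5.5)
The plan is to identify the $\F_q$-column space of $U(u)$ with the $\F_q$-span of the coordinates of $u$, and then bound that span by $\dim_{\F_q}\F_{q^d}=d$.

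By definition, $U(u)=[\Psi(u[1])\mid\cdots\mid\Psi(u[b])]\in\F_q^{m\times b}$. Its $\F_q$-rank is the dimension of its column space $\Span_{\F_q}\{\Psi(u[1]),\dots,\Psi(u[b])\}\subseteq\F_q^m$. The map $\Psi=\Psi_\gamma:\F_{q^m}\to\F_q^m$ is an $\F_q$-linear isomorphism, so this column space is the image under $\Psi$ of $\Span_{\F_q}\{u[1],\dots,u[b]\}\subseteq\F_{q^m}$. Hence
\[
\rank_{\F_q}U(u)=\dim_{\F_q}\Span_{\F_q}\{u[1],\dots,u[b]\}.
\]

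By hypothesis every $u[s]$ lies in the subfield $\F_{q^d}$, which exists because $d\mid m$. The subfield $\F_{q^d}$ is an $\F_q$-subspace of $\F_{q^m}$ of dimension $d$. The span above is therefore contained in $\F_{q^d}$, and its dimension is at most $d$, which gives $\rank_{\F_q}U(u)\le d$.

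There is no real obstacle. The only point to state explicitly is that the rank of $U(u)$ does not depend on the choice of basis $\gamma$, since $\Psi_\gamma$ is an isomorphism.

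A remark on how this will be used: the bound also applies to any $\F_{q^m}^\times$-rescaling $u/u[s_0]$. By \Cref{lem:orbit_subfield}, whenever the Frobenius orbit of $[W]$ has size $d<m$, the rescaled row vector has all coordinates in $\F_{q^d}$, so the reconstructed $U$ has $\F_q$-rank at most $d<m$ and is rejected in Phase~4.
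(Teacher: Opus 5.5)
Your proof is correct and is essentially the paper's argument: every column $\Psi(u[s])$ lies in the $d$-dimensional $\F_q$-subspace $\Psi(\F_{q^d})\subseteq\F_q^m$, so $\rank_{\F_q}U(u)\le d$, and your identification with $\dim_{\F_q}\Span_{\F_q}\{u[1],\dots,u[b]\}$ just makes this explicit. In your closing remark, going from the rescaled vector $u/u[s_0]$ back to the actual reconstructed $U$ uses one extra fact: multiplication by a nonzero $\lambda\in\F_{q^m}$ is an $\F_q$-linear automorphism of $\F_{q^m}$, so it preserves the rank.
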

\begin{proof}
The subfield $\F_{q^d}$ is a $d$-dimensional $\F_q$-vector space, hence every
column of $U(u)$ lies in the $d$-dimensional subspace
$\Psi(\F_{q^d})\subseteq\F_q^m$.
Therefore \[\rank_{\F_q}(U(u))\le d.\]
\end{proof}

The next proposition shows that, for a valid guess of $V$, the set
$\mathcal S(V)$ always contains the hidden Frobenius orbit and that this
orbit has the expected size $m$.

\begin{proposition}
\label{prop:dimA_lower}
\label{prop:frobenius_orbit_full_rank}
For any valid guess of $V$, \[\dim_{\F_q} A \ge m.\]  Moreover, the
rank-$1$ points in the hidden Frobenius orbit all reconstruct a full-rank
row compression in Phase~4.
\end{proposition}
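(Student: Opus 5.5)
For a valid $V$, I would first exhibit an explicit rank-$1$ point of $\ker_{\F_{q^m}}A(V)$ coming from the secret, then show that its Frobenius orbit has exactly $m$ distinct points, and finally use the Finiteness/Eigenvalue Theorems to turn $m$ distinct points of $\mathbf V(I)$ into $\dim_{\F_q}A\ge m$.

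\textbf{Step 1: the hidden solution.} Since $V$ is valid, there is a full-rank $U_\star$ such that $\mathcal S_{U_\star,V}$ is equivalent to a $[k+1,k]$ Gabidulin matrix code. After absorbing the left equivalence into $U_\star$ (as in the discussion before \Cref{prop:main_prop}), \Cref{prop:main_prop} gives a nonzero $h_\star$ annihilating $\col_{\F_{q^m}}(\mathcal S_{U_\star,V})$. Let $u_\star=\Psi^{-1}$ of the columns of $U_\star$. The derivation of \eqref{eq:linearized-Fqm} shows that $W_\star=h_\star u_\star^{\mathsf T}$ lies in $\ker_{\F_{q^m}}A(V)$ and has rank $1$. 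Hence $W_\star=W(\alpha_\star)$ for some $\alpha_\star$. Normalizing at an index $t_0$ with $\alpha_\star[t_0]\neq0$ gives a point $P_\star\in\mathbf V_{\F_{q^m}}(I)$.

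\textbf{Step 2: the orbit has size $m$.} Because $A(V)$ and the $K_t$ are defined over $\F_q$, $\sigma^i(W_\star)=W(P_\star^{[q^i]})$, and each of these is again a rank-$1$ point. Let $d$ be the orbit size of $[W_\star]$. By \Cref{lem:orbit_subfield}, $d\mid m$ and $u_\star[s]/u_\star[s_0]\in\F_{q^d}$ for all $s$. Multiplying $u_\star$ by the scalar $u_\star[s_0]^{-1}$ changes $U(u_\star)$ into $M U_\star$ with $M\in\GL_m(\F_q)$ the multiplication matrix, so the $\F_q$-rank is unchanged. \Cref{lem:rankU_le_d} then gives $m=\rank U_\star\le d$, hence $d=m$.

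The same argument applied to any point $\sigma^i(W_\star)$ of the orbit shows that its $u$-factor is $\sigma^i(u_\star)$ up to scaling. Since $\Psi(\sigma^i(x))=F^i\Psi(x)$ with $F\in\GL_m(\F_q)$ the Frobenius matrix in basis $\gamma$, each of these points reconstructs $F^iU_\star$ up to an invertible left factor. These are all full rank, which proves the ``moreover'' part.

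\textbf{Step 3: from points to dimension.} The $m$ projective classes are distinct. After fixing $\alpha[t_0]=1$, which is legitimate because $\sigma$ preserves $\alpha[t_0]\neq0$ and the value $1$, they give $m$ distinct affine points of $\mathbf V_{\overline{\F}_q}(I)$. If $\dim_{\F_q}A=\infty$ there is nothing to prove. Otherwise, choose a polynomial $f$ over $\F_q$ separating these $m$ points; a generic linear form works after a finite extension, and the dimension is invariant under base change. The \Cref{theo:ETbasic} then shows that $M_f$ has at least $m$ distinct eigenvalues, so $\dim A\ge m$. Alternatively, one can use the standard injection $A\otimes\overline{\F}_q\twoheadrightarrow\prod_P\overline{\F}_q$ via evaluation (Chinese remainder theorem).

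\textbf{Main obstacle.} The delicate step is Step 2: justifying that the normalization $\alpha[t_0]=1$ is compatible with the whole orbit and that the scaling of $u$ does not affect the rank bound. I would handle the scaling via the $\F_q$-linear multiplication-by-$c$ map on $\F_{q^m}$, which keeps $\rank_{\F_q}U(cu)=\rank_{\F_q}U(u)$.
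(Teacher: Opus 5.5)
Your proposal is correct and follows the paper's proof. You take the hidden rank-$1$ point $W_\star=h_\star u_\star^{\mathsf T}$, use Lemmas~\ref{lem:orbit_subfield} and~\ref{lem:rankU_le_d} to force its Frobenius orbit to have size $m$, and use the $\F_q$-linearity of Frobenius and scalar multiplication for the full-rank claim. Along the way you make explicit the rescaling by $u_\star[s_0]^{-1}$, which the paper skips when it writes $u^*\in\F_{q^g}$, and your Step~3 justifies the implication ``$m$ distinct zeros $\Rightarrow\dim_{\F_q}A\ge m$'' that the paper only asserts. One small slip: the evaluation map $A\otimes\overline{\F}_q\to\prod_P\overline{\F}_q$ is a surjection by the Chinese remainder theorem, not an injection, and surjectivity is exactly what the bound needs.
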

\begin{proof}
Let $W^*=h^*(u^*)^{\mathsf T}$ be the rank-$1$ kernel matrix obtained from
the hidden row compression $U^*$.  Since
$\ker_{\F_{q^m}}(A(V))$ is Frobenius-stable, for every $i=0,\dots,m-1$,
\[
[\sigma^i(W^*)]\in \mathcal S(V).
\]
We first prove that these $m$ projective points are distinct.  Suppose, to the
contrary, that $[\sigma^d(W^*)]=[W^*]$ for some $0<d<m$.  Choose
$s_0$ with $u^*[s_0]\ne0$, and set $g=\gcd(d,m)$.  By
\Cref{lem:orbit_subfield}, we have $u^* \in \F_{q^g}$.  Since $g<m$, \Cref{lem:rankU_le_d} gives
$\rank_{\F_q}U^*\le g<m$.  This contradicts the fact that 
\[
\rank_{\F_q} U^*
 =
\dim_{\F_q}\Span_{\F_q}\{u^*[1],\dots,u^*[m+\ell_1]\}
 =m .
\]
Therefore the hidden Frobenius orbit has size exactly $m$. After normalization, the same orbit gives $m$ distinct zeros of
the ideal $I$ over $\F_{q^m}$ hence $\dim_{\F_q}A\ge m$.
\noindent
Finally, any point in this orbit has the form
\[
\lambda\,\sigma^i(W^*)
  =(\lambda\,\sigma^i(h^*))\,\sigma^i(u^*)^{\mathsf T}
\]
for some $\lambda\in\F_{q^m}^{\times}$.  Frobenius and multiplication by a
nonzero scalar are $\F_q$-linear automorphisms of $\F_{q^m}$, so they
preserve the $\F_q$-span dimension of the entries of $u^*$.  Thus every
rank-$1$ point in the hidden orbit reconstructs a matrix $U(u)$ of rank
$m$ in Phase~4. 
\end{proof}
\subsection{The probability of unwanted solutions}
\label{subsec:unwanted_solutions}
Here, we prove the following proposition:

\begin{proposition}
\label{prop:dimA_upper}
Assume that $m$ is prime, under the random-subspace model described below,
\[
\Pr\!\bigl[\dim_{\F_q} A > m\bigr] \;\le\; q^{(1-k)(m-1)+\ell_1}.
\]
For the concrete parameter sets the exponent is dominated by $-(k-1)(m-1)$, so this bound is negligible.
\end{proposition}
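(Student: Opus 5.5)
The plan is to bound the probability that the rank-$1$ solution set $\mathcal S(V)$ contains any point outside the hidden Frobenius orbit, and then use a union bound over the possible orbit types.

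\textbf{Step 1: orbit sizes of spurious points.} Suppose $\dim_{\F_q}A>m$. By \Cref{prop:dimA_lower} the hidden orbit contributes exactly $m$ zeros. So either some zero is non-reduced, or there is an extra projective point $[W]=[hu^{\mathsf T}]\in\mathcal S(V)$ outside the hidden orbit. I will neglect the multiplicity case, treating it as part of the random-subspace model, or bound it by the same count through the tangent space. By \Cref{lem:orbit_subfield}, the orbit size $d$ of $[W]$ divides $m$. Since $m$ is prime, $d\in\{1,m\}$.
\begin{itemize}
\item If $d=1$, then after scaling $u\in\F_q^{m+\ell_1}$, and by \Cref{lem:rankU_le_d} the reconstructed $U$ has rank $1$.
\item If $d=m$, the point is a generic spurious rank-$1$ element.
\end{itemize}
In both cases the extra point corresponds to a pair $(h,u)$, with $u\in\F_{q^m}^{m+\ell_1}$, $h\in\F_{q^m}^{k+1}$, and $hu^{\mathsf T}\in\ker A(V)$. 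Equivalently, the $\F_q$-span $\Span_{\F_q}\{u[s]\}$ defines a row compression $U(u)$ such that the compressed code $\col_{\F_{q^m}}(U(u)\,\mathcal C V)$ lies in the hyperplane $h^\perp$.

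\textbf{Step 2: random-subspace model.} Model the public code as follows. The compressed map $X\mapsto U(u)XV$ applied to $\mathcal C$ is a uniformly random $mk$-dimensional $\F_q$-subspace of $\F_q^{m\times(k+1)}$, independent of the hidden solution whenever $u$ is not $\F_{q^m}$-proportional to a Frobenius conjugate of $u^*$. For a fixed pair $([h],[u])$, the condition ``image $\subseteq\Psi(h^\perp)$'' asks an $mk$-dimensional random space to fall into a fixed $\F_q$-subspace of codimension $m$. Assuming the image has full dimension $mk$, this happens with probability about $q^{-m\cdot mk}\cdot q^{m\cdot m(k)}$ in Grassmannian terms; more carefully, I will bound it per codeword direction by $q^{-m}$ for each of $mk$ independent coordinates, conditioned on the part already forced by the hidden structure. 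The honest way is to write $u=u^*M$ for some $M$, where $u^*$ spans a complementary coordinate system, and count the free parameters.

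\textbf{Step 3: counting.} The number of candidate projective pairs is at most $q^{m(m+\ell_1)}\cdot q^{m(k+1)}/q^{2m}$. The number of independent $\F_q$-linear constraints imposed on a non-hidden pair is $mk\cdot m$ minus the dimension already absorbed by the hidden solution. I expect the net exponent to collapse, after subtracting the $m(m-1)$-dimensional family equivalent to the hidden orbit via $\GL_m$ and Frobenius, to $-(k-1)(m-1)+\ell_1$. The $\ell_1$ comes from the extra row directions, the $(m-1)$ from the projective freedom in $u$ modulo $\F_{q^m}$-scaling, and the $(k-1)$ from the hyperplane constraint minus the free choice of $[h]$. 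The union bound over all such pairs then gives $q^{(1-k)(m-1)+\ell_1}$.

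The main obstacle is Step 3: getting the parameter count to produce exactly $(1-k)(m-1)+\ell_1$. This requires identifying which pairs are automatically solutions, namely the hidden orbit and its $\F_{q^m}$-scalings, and excluding them. It also requires showing that for the remaining pairs the constraints are independent under the model. I would first handle $d=1$ separately, since there $u\in\F_q$ gives only $q^{m+\ell_1}$ choices and the bound is trivially stronger. For $d=m$ I would parametrize $u$ modulo the hidden one.
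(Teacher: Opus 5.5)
\textbf{There is a genuine gap, and it is in Steps 2--3: you have the two orbit types backwards.}

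\emph{The generic case.} If you actually carry out your own count for pairs over $\F_{q^m}$, you get $\approx q^{m(m+\ell_1-1)}\cdot q^{mk}$ projective pairs $([u],[h])$, each surviving with probability $\approx q^{-m\cdot mk}$. The product is
\[
q^{m[(1-k)(m-1)+\ell_1]},
\]
which is the $m$-th power of the target quantity. No subtraction of a ``$\GL_m$/Frobenius family'' turns this into $q^{(1-k)(m-1)+\ell_1}$, so the ``collapse'' you expect in Step~3 does not occur.

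\emph{The case $d=1$.} This case is not ``trivially stronger''; it is exactly the case that produces the stated exponent. For an $\F_q$-rational $W=xy^{\mathsf T}$, membership in $\ker A(V)$ imposes only $mk$ linear conditions over $\F_q$. So it survives with probability $\approx q^{-mk}$, not $q^{-m\cdot mk}$. There are $\approx q^{(k+1)+(m+\ell_1)-2}$ such projective points, which gives exactly $q^{(1-k)(m-1)+\ell_1}$. Your Step~2 model cannot see this. It assumes the compressed image $U(u)\mathcal C V$ has dimension $mk$, but for $d=1$ the matrix $U(u)$ has rank $1$ and the image has dimension at most $k+1$.

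\textbf{The paper's argument is this $\F_q$-rational count.} Set $a=k+1$, $b=m+\ell_1$, and $r_0=\dim_{\F_q}\ker_{\F_q}A(V)=m+(k+1)\ell_1$. The paper models $\mathbb P(\ker_{\F_q}A(V))$ as a uniformly random projective subspace of $\mathbb P^{ab-1}(\F_q)$, independent of the Segre locus $\Sigma_0$ of rank-$1$ matrices, with $\#\Sigma_0\approx q^{a+b-2}$. This gives $\E[a_1(V)]\approx q^{r_0+a+b-ab-2}=q^{(1-k)(m-1)+\ell_1}$. Markov's inequality then gives $\Pr[a_1(V)\ge 1]\le \E[a_1(V)]$.

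The paper does not bound spurious size-$m$ orbits at all. It takes $a_m(V)=1$ as an empirically observed assumption, and, like you, it ignores multiplicities. Under these assumptions, $\dim_{\F_q}A>m$ holds exactly when $a_1(V)\ge 1$.

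\textbf{Where your union bound still helps.} It can serve as a heuristic replacement for that empirical assumption. Model $\ker_{\F_q}A(V)$ as random over $\F_q$, not over $\F_{q^m}$. A generic $\F_{q^m}$-point then lies in its scalar extension with probability $\approx q^{-m\cdot mk}$, because its Frobenius conjugates span an $m$-dimensional $\F_q$-rational space. Done this way, it yields only the lower-order term $q^{m[(1-k)(m-1)+\ell_1]}$. It does not yield the bound stated in the proposition.
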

\begin{proof}
From Lemmas~\ref{lem:orbit_subfield} and~\ref{lem:rankU_le_d}, any solution
whose $u$-coordinates are projectively defined over a proper subfield can only
give a rank-deficient compression. Thus a full-rank recovered $U(u)$ can only come from a solution whose coordinates generate the full
$\F_q$-space $\F_{q^m}$.  We consider two cases:
\begin{enumerate}
\item If we assume that $m$ is prime, which is true for concrete parameters. Obviously, the only divisors of $m$ are $1$ and $m$. Therefore every orbit in \Cref{eq:orbit_decomp} has size either $1$ or $m$. Furthermore, size-$m$ orbits correspond to solutions not defined over any proper subfield (``generic''),
while size-$1$ orbits correspond to solutions defined over $\F_q$ (projectively Frobenius-fixed).
By Lemma~\ref{lem:rankU_le_d}, size-$1$ orbits satisfy $\rank_{\F_q}(U(u))\le 1$, hence (except the trivial all-zero $u$ which cannot happen for rank-1 $W$) typically $\rank_{\F_q}(U(u))=1$. Thus, for prime $m$, orbit-size-$1$ solutions are the main systematic source of rank-deficient outputs; the hidden size-$m$ orbit is full rank by \Cref{prop:frobenius_orbit_full_rank}. 

\item If we assume that $m$ is composite, although it is not the case with the concrete parameters. Then, orbit sizes can be any divisor $d\mid m$. In that case, orbit size $d$ typically yields $\rank_{\F_q}(U)\le d$ by Lemma~\ref{lem:rankU_le_d}. In particular, size-$1$ orbits correspond to $\F_q$-defined solutions and often yield $\rank_{\F_q}(U)=1$, size-$d$ orbits (proper divisors $1<d<m$) correspond to $\F_{q^d}$-defined solutions and often yield $\rank_{\F_q}(U)=d$, and size-$m$ orbits correspond to ``full field'' solutions and allow $\rank_{\F_q}(U)=m$.
\end{enumerate}
\noindent
Now we consider the following question: \textit{For random EGMC public code, random $V$ from Phase 1, if one selects a rank-1 solution $[W]$ \emph{uniformly at random} from $\mathcal{S}(V)$, then what is the
\[
\Pr(\rank_{\F_q}(U)<m\mid V) 
=
\frac{\#\{[W]\in \mathcal S(V): \rank_{\F_q}(U(W))<m\}}{\#\mathcal S(V)} \; ?
\]}
\noindent
Firstly, from Equation \eqref{eq:orbit_decomp} and the fact that $m$ is prime for concrete parameters, the total number of projective rank-1 solutions admits the decomposition
\begin{equation*}
\#\mathcal S(V) = m\cdot a_m(V) + a_1(V),
\label{eq:count_prime}
\end{equation*}
where $a_m(V)$ counts size-$m$ orbits and $a_1(V)$ counts size-$1$ orbits.
We denote $a=k+1$ and $b=m+\ell_1$ and 
\[
L(V)=
\{\,W\in \Mat_{a\times b}(\F_{q^m}) :
\operatorname{vec}_{a,b}(W)\in \ker_{\F_{q^m}} A(V)\,\}.
\]
The $\F_q$-rational part of the kernel is
\[
L_0(V)= L(V)\cap \Mat_{a\times b}(\F_q).
\]
\noindent
Let $r_0=\dim_{\F_q} L_0(V)$ and the projective rank-1 locus over $\F_q$ is
\[
\Sigma_0 = \{[W]\in \mathbb P(\Mat_{a\times b}(\F_q)):\ \rank_{\F_q}(W)=1\}.
\]
Then, in the $m$ prime case, size-$1$ Frobenius orbits correspond precisely to $\F_q$-rational rank-1 points:
\[
a_1(V)=\#\bigl(\mathbb P(L_0(V))\cap \Sigma_0\bigr).
\]
\noindent
Because every nonzero rank-1 matrix $W\in \Mat_{a\times b}(\F_q)$ can be written as $W=xy^\top$ for some nonzero column vector $x$ and row vector $y^\top$ and such a factorization is unique up to multiplication by a scalar\footnote{This is known as the Segre embedding.},
the number of projective rank-1 matrices is
\begin{align*}
&\#\Sigma_0 = \#\mathbb P^{a-1}(\F_q) \times \#\mathbb P^{b-1}(\F_q) = \left(\frac{q^a-1}{q-1}\right) \left(\frac{q^b-1}{q-1}\right)
\end{align*} while number of projective matrices of size $a \times b$ and projective $(r_0-1)$-subspaces are approximately
\begin{align*}
&\#\mathbb P(\Mat_{a\times b}(\F_q)) = \frac{q^{ab}-1}{q-1}\notag\\
&\#\mathbb P(L_0(V)) = \frac{q^{r_0}-1}{q-1}.\notag
\end{align*}
\noindent
We model $\mathbb P(L_0(V))$ as a uniformly random projective $(r_0-1)$-subspace of $\mathbb P^{ab-1}(\F_q)$ independent of $\Sigma_0$. Under this model, the expected number of rank-1 $\F_q$-points is
\begin{align}
\label{eq:Ea1_asymp_general}
\mathbb E[a_1(V)]
&\approx
\#\mathbb P(L_0(V))\cdot
\frac{\#\Sigma_0}{\#\mathbb P(\Mat_{a\times b}(\F_q))}\nonumber\\
&=
\frac{q^{r_0}-1}{q-1}\cdot
\frac{(q^a-1)(q^b-1)}{(q-1)(q^{ab}-1)} \nonumber\\
&\approx q^{\,r_0+a+b-ab-2}.
\end{align}
Over $\F_{q^m}$, if $A(V)$ has full row rank $mk$ (over $\F_{q^m}$, equivalently over $\F_q$), then
\[
r = \dim_{\F_{q^m}} L(V)=ab-mk =(k+1)(m+\ell_1)-mk = m+(k+1)\ell_1.
\]
Recall that in Phase 3 of the Hybrid Distinguisher, we have proven that the kernel over the extension field $\F_{q^m}$ is simply the scalar extension of the kernel over $\F_q$. Therefore
\[
r_0 = \dim_{\F_q}(\ker_{\F_q} A) = \dim_{\F_{q^m}}(\ker_{\F_{q^m}} A) = r.
\]
For $a=k+1$, $b=m+\ell_1$, $r_0= m+(k+1)\ell_1$,  Equation \eqref{eq:Ea1_asymp_general} gives the expectation:
\begin{equation}
\mathbb E[a_1(V)] \approx q^{(1-k)(m-1)+\ell_1}.
\label{eq:Ea1_special}
\end{equation}
For $k\ge 2$, the exponent is dominated by $-(k-1)(m-1)$, predicting that $a_1(V)$ is very small. Therefore,  Markov's inequality gives
\begin{equation*}
\Pr\big(a_1(V)\ge 1\big)\ \le\ \mathbb E[a_1(V)].
\label{eq:Markov_bound}
\end{equation*}
If $\E[a_1] \ll 1$, we can approximately assume $\Pr (a_1 \ge 1) \approx \E[a_1]$. Hence, under our random-subspace model and when \eqref{eq:Ea1_special} is $\ll 1$, the probability that we obtain base-field rank-1 solutions should be negligible. Empirically, we observe that, with overwhelming probability, $a_m(V) = 1$ for ``valid'' $V$, and the solution set size is
$
\#\mathcal S(V)= m
$ (see Appendix \ref{appendix:macaulay}). Combined with the EGMC assumption $a_m(V)=1$, this gives
\[
\Pr\!\bigl[\#\mathcal{S}(V) > m\bigr]
\;=\;\Pr\!\bigl[a_1(V)\ge 1\bigr]
\;\le\; q^{(1-k)(m-1)+\ell_1},
\]
which gives \[\Pr[\dim_{\F_q} A > m] \le q^{(1-k)(m-1)+\ell_1}.\] 
\end{proof}

\section{Complexity for concrete parameter sets}
\label{sec:complexity}
There are two directions of the distinguisher, thus the overall cost of the attack will be analyzed accordingly.  In the guess-$V$-solve-$U$
direction, the cost of the attack is
\begin{equation}
  \label{eq:guessV_solveU_cost_main}
C_{V\to U} 
=
\mathcal O\!\left(
\underbrace{q^{(k+1)\ell_2}}_{\text{$V$-guess}}
\underbrace{
\tbinom{m+(k+1)\ell_1+1}{2}^{\omega}}_{\text{rank-$1$ extraction}} \right) .
\end{equation}
In the guess-$U$-solve-$V$
direction, the cost of the attack is 
\begin{equation}
  \label{eq:guessU_solveV_cost_main}
C_{U\to V}
=
\mathcal O\!\left(
\underbrace{q^{m\ell_1}}_{\text{$U$-guess}}
\underbrace{\tbinom{m+\ell_2(1+k(m-k))+1}{2}^{\omega}}_{\text{rank-$1$ extraction}}
\right).
\end{equation} 
In both directions, the $\omega = 2.8$ is the linear algebra constant. Our attack is especially strong when $\ell_1=0$ or $\ell_2 = 0$ because
the exponential guessing factor disappears and the attack becomes
polynomial-time. The final
attack estimate is the better of the two directions:
\begin{equation}
\label{eq:hybrid_attack_min_cost}
\min\{\log_2 C_{V\to U},\log_2 C_{U\to V}\},
\end{equation}
and the cost for concrete parameter sets are given in \Cref{tab:attack_cost_summary}.
\paragraph{Acknowledgements}
The author would like to thank Brice Minaud for helpful discussions. This work was supported by the France 2030 ANR Project ANR-22-PECY-003 SecureCompute and the ANR grant PQMC (ANR-25-CE39-7662).
\appendix
\par\noindent
\begin{minipage}{\textwidth}
\setlength{\intextsep}{4pt}
\section{Macaulay diagnostics for quadratic determinantal systems in guess-$V$-solve-$U$ direction}
\label{appendix:macaulay}
\begin{table}[H]
    \captionsetup{font=small,skip=3pt}
    \caption{Small guess-$V$-solve-$U$ quadratic determinantal systems: each degree entry gives the Macaulay rank/column count in corresponding degree $D$. The $\%$ column reports the percentage of full-rank $U$ recoveries over 1000 trials per parameter set.}
    \label{macaulay}
    \centering
    \fontsize{8}{9}\selectfont
    \renewcommand{\arraystretch}{1}
    \setlength{\tabcolsep}{1pt}
    \setlength{\aboverulesep}{0.5pt}
    \setlength{\belowrulesep}{1pt}
    \begin{minipage}[t]{0.485\linewidth}
    \begin{tabular*}{\linewidth}[t]{@{\extracolsep{\fill}}ccccccccc@{}}
        \toprule
        \multicolumn{5}{c}{Parameters} & \multicolumn{3}{c}{Rank / columns} & \multirow{2}{*}{$\%$} \\
        \cmidrule(lr){1-5} \cmidrule(lr){6-8}
        $q$ & $k$ & $m$ & $\ell_1$ & $\ell_2$ & $D=2$ & $D=3$ & $D=4$ &  \\
        \midrule
        2 & 2 & 4 & 1 & 0 & $24/28$ & $80/84$ & $206/210$ & 66 \\
        \midrule
        2 & 2 & 4 & 1 & 1 & $24/28$ & $80/84$ & $206/210$ & 99.5 \\
        \midrule
        2 & 3 & 4 & 1 & 0 & $32/36$ & $116/120$ & $326/330$ & 94.6 \\
        \midrule
        2 & 3 & 4 & 1 & 1 & $32/36$ & $116/120$ & $326/330$ & 100 \\
        \midrule
        2 & 3 & 6 & 1 & 0 & $49/55$ & $214/220$ & $709/715$ & 99 \\
        \midrule
        2 & 3 & 6 & 1 & 1 & $49/55$ & $214/220$ & $709/715$ & 100 \\
        \midrule
        2 & 6 & 7 & 1 & 0 & $98/105$ & $553/560$ & $2373/2380$ & 100 \\
        \midrule
        2 & 6 & 7 & 1 & 1 & $98/105$ & $553/560$ & $2373/2380$ & 100 \\
        \midrule
        2 & 5 & 7 & 1 & 0 & $84/91$ & $448/455$ & $1813/1820$ & 100 \\
        \midrule
        2 & 5 & 7 & 1 & 1 & $84/91$ & $448/455$ & $1813/1820$ & 100 \\
        \midrule
        2 & 4 & 7 & 1 & 0 & $71/78$ & $357/364$ & $1358/1365$ & 100 \\
        \midrule
        2 & 4 & 7 & 1 & 1 & $71/78$ & $357/364$ & $1358/1365$ & 100 \\
        \bottomrule
    \end{tabular*}
    \end{minipage}
    \hfill
    \begin{minipage}[t]{0.485\linewidth}
    \begin{tabular*}{\linewidth}[t]{@{\extracolsep{\fill}}ccccccccc@{}}
        \toprule
        \multicolumn{5}{c}{Parameters} & \multicolumn{3}{c}{Rank / columns} & \multirow{2}{*}{$\%$} \\
        \cmidrule(lr){1-5} \cmidrule(lr){6-8}
        $q$ & $k$ & $m$ & $\ell_1$ & $\ell_2$ & $D=2$ & $D=3$ & $D=4$ &  \\
        \midrule
        2 & 4 & 7 & 2 & 0 & $146/153$ & $962/969$ & $4838/4845$ & 100 \\
        \midrule
        2 & 4 & 7 & 2 & 2 & $146/153$ & $962/969$ & $4838/4845$ & 100 \\
        \midrule
        2 & 4 & 5 & 1 & 0 & $50/55$ & $215/220$ & $710/715$ & 100 \\
        \midrule
        2 & 4 & 5 & 1 & 1 & $50/55$ & $215/220$ & $710/715$ & 100 \\
        \midrule
        2 & 3 & 5 & 1 & 0 & $40/45$ & $160/165$ & $490/495$ & 100 \\
        \midrule
        2 & 3 & 5 & 1 & 1 & $40/45$ & $160/165$ & $490/495$ & 100 \\
        \midrule
        2 & 3 & 5 & 1 & 2 & $40/45$ & $160/165$ & $490/495$ & 100 \\
        \midrule
        2 & 3 & 5 & 1 & 3 & $40/45$ & $160/165$ & $490/495$ & 100 \\
        \midrule
        2 & 2 & 5 & 1 & 0 & $31/36$ & $115/120$ & $325/330$ & 100 \\
        \midrule
        2 & 2 & 5 & 1 & 1 & $31/36$ & $115/120$ & $325/330$ & 100 \\
        \midrule
        2 & 4 & 5 & 2 & 0 & $115/120$ & $675/680$ & $3055/3060$ & 100 \\
        \midrule
        2 & 4 & 5 & 2 & 2 & $115/120$ & $675/680$ & $3055/3060$ & 100 \\
        \bottomrule
    \end{tabular*}
    \end{minipage}
\end{table}
\section{Macaulay diagnostics for quadratic determinantal systems in guess-$U$-solve-$V$ direction}
\label{appendix:macaulay_guessU}
\begin{table}[H]
    \captionsetup{font=small,skip=3pt}
    \caption{Small guess-$U$-solve-$V$ quadratic determinantal systems: each degree entry gives the Macaulay rank/column count in corresponding degree $D$. The $\%$ column reports the percentage of full-rank $V$ recoveries over 1000 trials per parameter set.}
    \label{macaulay_guessU}
    \centering
    \fontsize{8}{9}\selectfont
    \renewcommand{\arraystretch}{1}
    \setlength{\tabcolsep}{1pt}
    \setlength{\aboverulesep}{0.5pt}
    \setlength{\belowrulesep}{1pt}
    \begin{minipage}[t]{0.485\linewidth}
    \begin{tabular*}{\linewidth}[t]{@{\extracolsep{\fill}}cccccccc@{}}
        \toprule
        \multicolumn{4}{c}{Parameters} & \multicolumn{3}{c}{Rank / columns} & \multirow{2}{*}{$\%$} \\
        \cmidrule(lr){1-4} \cmidrule(lr){5-7}
        $q$ & $k$ & $m$ & $\ell_2$ & $D=2$ & $D=3$ & $D=4$ &  \\
        \midrule
        2 & 2 & 4 & 1 & $41/45$ & $161/165$ & $491/495$ & 81.0 \\
        \midrule
        2 & 3 & 4 & 1 & $32/36$ & $116/120$ & $326/330$ & 94.0 \\
        \midrule
        2 & 3 & 4 & 2 & $74/78$ & $360/364$ & $1361/1365$ & 80.0 \\
        \midrule
        2 & 3 & 6 & 1 & $130/136$ & $810/816$ & $3870/3876$ & 100 \\
        \midrule
        2 & 6 & 7 & 1 & $98/105$ & $553/560$ & $2373/2380$ & 100 \\
        \bottomrule
    \end{tabular*}
    \end{minipage}
    \hfill
    \begin{minipage}[t]{0.485\linewidth}
    \begin{tabular*}{\linewidth}[t]{@{\extracolsep{\fill}}cccccccc@{}}
        \toprule
        \multicolumn{4}{c}{Parameters} & \multicolumn{3}{c}{Rank / columns} & \multirow{2}{*}{$\%$} \\
        \cmidrule(lr){1-4} \cmidrule(lr){5-7}
        $q$ & $k$ & $m$ & $\ell_2$ & $D=2$ & $D=3$ & $D=4$ &  \\
        \midrule
        2 & 4 & 5 & 1 & $50/55$ & $215/220$ & $710/715$ & 100 \\
        \midrule
        2 & 3 & 5 & 1 & $73/78$ & $359/364$ & $1360/1365$ & 100 \\
        \midrule
        2 & 2 & 5 & 1 & $73/78$ & $359/364$ & $1360/1365$ & 100 \\
        \midrule
        2 & 4 & 5 & 2 & $115/120$ & $675/680$ & $3055/3060$ & 100 \\
        \bottomrule
    \end{tabular*}
    \end{minipage}
\end{table}
\end{minipage}
\par\clearpage
\section{Encryption schemes based on EGMC}
\label{appendix:McEliece_Niederreiter_encryption_framework}
\begin{figure}[H]
    \centering
    \scriptsize
    \captionsetup{font=scriptsize,skip=4pt}
    \setlength{\fboxsep}{3pt}
    \setlength{\abovedisplayskip}{3pt}
    \setlength{\belowdisplayskip}{3pt}
    \setlength{\abovedisplayshortskip}{2pt}
    \setlength{\belowdisplayshortskip}{2pt}
    \renewenvironment{itemize}{%
        \begin{list}{}{%
            \setlength{\leftmargin}{7pt}%
            \setlength{\labelwidth}{4pt}%
            \setlength{\labelsep}{3pt}%
            \setlength{\topsep}{2pt}%
            \setlength{\itemsep}{0.8pt}%
            \setlength{\parsep}{0pt}%
            \setlength{\partopsep}{0pt}%
        }%
    }{\end{list}}
    \begin{minipage}[t]{0.485\textwidth}
        \vspace{0pt}
        \fbox{%
        \begin{minipage}[t]{\dimexpr\linewidth-2\fboxsep-2\fboxrule\relax}
            \raggedright
            \setlength{\parskip}{2pt}
        
        \textbf{KeyGen} ($1^\lambda$):
        \begin{itemize}
            \item[-] Select a random $[m, k]_{q^m}$ Gabidulin code, with an efficient algorithm capable of decoding up to $\left\lfloor \frac{m-k}{2} \right\rfloor$ errors.
            \item[-] Sample uniformly at random a basis $\gamma \xleftarrow{\$} \mathcal{B}(\mathbb{F}_{q^m})$.
            \item[-] Compute a basis $(\boldsymbol{A}_1, \dots, \boldsymbol{A}_{km})$ of the code $\Psi_\gamma(\mathcal{G})$.
            \item[-] For $i$ in range 1 to $km$, sample uniformly at random: $\boldsymbol{R}_i \xleftarrow{\$} \mathbb{F}_q^{m \times \ell_2}$, $\boldsymbol{R}'_i \xleftarrow{\$} \mathbb{F}_q^{\ell_1 \times m}$ and $\boldsymbol{R}''_i \xleftarrow{\$} \mathbb{F}_q^{\ell_1 \times \ell_2}$.
            \item[-] Define the matrix code $\mathcal{C}_{mat}$ as explained in Definition 16.
            \item[-] Sample uniformly at random matrices $\boldsymbol{P} \xleftarrow{\$} \mathbf{GL}_{m+\ell_1}(\mathbb{F}_q)$ and $\boldsymbol{Q} \xleftarrow{\$} \mathbf{GL}_{m+\ell_2}(\mathbb{F}_q)$.
            \item[-] Define the code $\mathcal{C}'_{mat} = \boldsymbol{P}\mathcal{C}_{mat}\boldsymbol{Q}$.
            \item[-] Let $\mathcal{B} = (\boldsymbol{M}_1, \dots, \boldsymbol{M}_{km})$ be a basis of $\mathcal{C}'_{mat}$.
            \item[-] Return $\text{pk} = \mathcal{B}$ and $\text{sk} = (\mathcal{G}, \gamma, \boldsymbol{P}, \boldsymbol{Q})$.
        \end{itemize}

        \textbf{Encrypt}(pk, $\mu$):
        \begin{itemize}
             \item[] \textit{Input:} $\text{pk} = (\boldsymbol{M}_1, \dots, \boldsymbol{M}_{km})$, $\mu \in \mathbb{F}_q^{km}$.
             \item[-] Sample uniformly at random a matrix $\boldsymbol{E} \in \mathbb{F}_q^{(m+\ell_1)\times(m+\ell_2)}$ such that rank $\boldsymbol{E} \le r$.
             \item[-] Return $\boldsymbol{Y} = \sum_{i=1}^{km} \mu_i \boldsymbol{M}_i + \boldsymbol{E}$.
        \end{itemize}

        \textbf{Decrypt}(sk, $\boldsymbol{Y}$):
        \begin{itemize}
            \item[-] Compute $\boldsymbol{P}^{-1}\boldsymbol{Y}\boldsymbol{Q}^{-1}$. Truncate the $\ell_1$ last rows to obtain $\boldsymbol{M} \in \mathbb{F}_q^{m \times (m+\ell_2)}$.
            \item[-] Compute $\Psi_\gamma^{-1}(\boldsymbol{M}) \in \mathbb{F}_{q^m}^{m+\ell_2}$. Let $\boldsymbol{y}$ be the first $m$ entries.
            \item[-] Apply the decoding algorithm of $\mathcal{G}$ on the word $\boldsymbol{y}$, it returns an error vector $\boldsymbol{e} \in \mathbb{F}_{q^m}^m$.
            \item[-] Compute $\boldsymbol{E}' = \Psi_\gamma(\boldsymbol{e})$.
            \item[-] Solve the linear system:
            \[
            \boldsymbol{Y} - (\boldsymbol{E}' | \boldsymbol{N})\boldsymbol{Q} = \sum_{i=1}^{km} \mu_i \boldsymbol{M}_i
            \]
            whose $(k+\ell_2)m$ unknowns in $\mathbb{F}_q$ are the $\mu_i$ and the remaining part of the error $\boldsymbol{N}$ of size $m \times \ell_2$.
            \item[-] Return $\hat{\mu}$ the solution of the above system.
        \end{itemize}
        \end{minipage}%
        }
        \caption{EGMC-McEliece encryption scheme}
        \label{fig:egmc_mceliece}
    \end{minipage}
    \hfill
    \begin{minipage}[t]{0.485\textwidth}
        \vspace{0pt}
        \fbox{%
        \begin{minipage}[t]{\dimexpr\linewidth-2\fboxsep-2\fboxrule\relax}
            \raggedright
            \setlength{\parskip}{2pt}
        
        \textbf{KeyGen} ($1^\lambda$):
        \begin{itemize}
            \item[-] Select a random $[m, k]_{q^m}$ Gabidulin code, with an efficient algorithm capable of decoding up to $\left\lfloor \frac{m-k}{2} \right\rfloor$ errors.
            \item[-] Sample uniformly at random a basis $\gamma \xleftarrow{\$} \mathcal{B}(\mathbb{F}_{q^m})$.
            \item[-] Compute a basis $(\boldsymbol{A}_1, \dots, \boldsymbol{A}_{km})$ of the code $\Psi_\gamma(\mathcal{G})$.
            \item[-] For $i$ in range 1 to $km$, sample uniformly at random: $\boldsymbol{R}_i \xleftarrow{\$} \mathbb{F}_q^{m \times \ell_2}$, $\boldsymbol{R}'_i \xleftarrow{\$} \mathbb{F}_q^{\ell_1 \times m}$ and $\boldsymbol{R}''_i \xleftarrow{\$} \mathbb{F}_q^{\ell_1 \times \ell_2}$.
            \item[-] Define the matrix code $\mathcal{C}_{mat}$ as explained above.
            \item[-] Sample uniformly at random matrices $\boldsymbol{P} \xleftarrow{\$} \mathbf{GL}_{m+\ell_1}(\mathbb{F}_q)$ and $\boldsymbol{Q} \xleftarrow{\$} \mathbf{GL}_{m+\ell_2}(\mathbb{F}_q)$.
            \item[-] Define the code $\mathcal{C}'_{mat} = \boldsymbol{P}\mathcal{C}_{mat}\boldsymbol{Q}$.
            \item[-] Compute $\bar{\boldsymbol{H}} \in \mathbb{F}_q^{((m+\ell_1)(m+\ell_2)-mk)\times(m+\ell_1)(m+\ell_2)}$ a parity check matrix of $\mathcal{C}'_{mat}$.
            \item[-] Return $\text{pk} = \bar{\boldsymbol{H}}$; $\text{sk} = (\gamma, \mathcal{G},\boldsymbol{P}, \boldsymbol{Q})$
        \end{itemize}

        \textbf{Encrypt}(pk, $\mu$):
        \begin{itemize}
             \item[] \textit{Input:} $\text{pk} = \bar{\boldsymbol{H}}$, a message $\mu \in \mathbb{F}_q^{(m+\ell_1)(m+\ell_2)}$ such that rank $\text{Fold}(\mu) \le r$.
             \item[-] For every integer $i$ from 1 to $(m+\ell_1)(m+\ell_2)$, let $\boldsymbol{h}_i$ be the $i$-th column of $\bar{\boldsymbol{H}}$.
             \item[-] Return $\boldsymbol{c} = \sum_{i=1}^{(m+\ell_1)(m+\ell_2)} \mu_i \boldsymbol{h}_i^\top$.
        \end{itemize}

        \textbf{Decrypt}(sk, $\boldsymbol{c}$):
        \begin{itemize}
            \item[] \textit{Input:} $\text{sk} = (\gamma, \mathcal{G}, \boldsymbol{P}, \boldsymbol{Q})$, $\boldsymbol{c} \in \mathbb{F}_q^{(m+\ell_1)(m+\ell_2)-mk}$.
            \item[-] Find any $\bar{\boldsymbol{y}} \in \mathbb{F}_q^{(m+\ell_1)(m+\ell_2)}$ such that $\boldsymbol{c} = \sum_{i=1}^{(m+\ell_1)(m+\ell_2)} \bar{y}_i \boldsymbol{h}_i^\top$.
            \item[-] Let $\boldsymbol{Y} = \text{Fold}(\bar{\boldsymbol{y}})$. Compute $\boldsymbol{P}^{-1}\boldsymbol{Y}\boldsymbol{Q}^{-1}$. Truncate the $\ell_1$ last rows to obtain $\boldsymbol{M} \in \mathcal{M}_{m \times (m+\ell)}(\mathbb{F}_q)$.
            \item[-] Let $\boldsymbol{y}$ be the first $m$ coordinates of $\Psi_\gamma^{-1}(\boldsymbol{M}) \in \mathbb{F}_{q^m}^{m+\ell_2}$.
            \item[-] Let $\boldsymbol{y} = \boldsymbol{m}\boldsymbol{G} + \boldsymbol{e}$, with rank$(\boldsymbol{e}) \le r$. Apply the decoding algorithm of $\mathcal{G}$ on the word composed of the $m$ first coordinates.
            \item[-] Let $\boldsymbol{E} = \Psi_\gamma(\boldsymbol{e})$. Then $\boldsymbol{E} = (\boldsymbol{E}' | \boldsymbol{N}) \in \mathcal{M}_{m \times (m+\ell_2)}(\mathbb{F}_q)$ is a matrix whose the $nm$ coefficients of $\boldsymbol{E}'$ are known, and $\boldsymbol{N}$ is the remaining part of the error.
            \item[-] Solve the linear system $\boldsymbol{c} = \sum_{i=1}^{(m+\ell_1)(m+\ell_2)} \bar{e}_i \boldsymbol{h}_i^\top$ to find the $\ell m$ values of $\boldsymbol{N}$, with $\bar{\boldsymbol{e}} = \text{Unfold}(\boldsymbol{E})$ with unknown some coefficients.
            \item[-] Return $\hat{\mu} = \text{Unfold}(\boldsymbol{E})$.
        \end{itemize}
        \end{minipage}%
        }
        \caption{EGMC-Niederreiter encryption scheme}
        \label{fig:egmc_niederreiter}
    \end{minipage}
\end{figure}
\clearpage
\bibliographystyle{splncs04}
\bibliography{mybibliography}
\end{document}